\definecolor{darkgreen}{rgb}{0.0,0,0.9}
\newtheorem{theorem}{Theorem}[section]
\newtheorem{lemma}{Lemma}[section]
\newtheorem{obs}{Observation}[section]
\newcommand{\ugrid}[2]{\underline{\Gamma}}
\newcommand{\opt}{S^*}
\newcommand{\brc}[1]{\text{brc}(#1)}
\newcommand{\bs}[1]{\text{bs}(#1)}
\newcommand{\ls}[1]{\text{ls}(#1)}
\newcommand{\ts}[1]{\text{ts}(#1)}
\newcommand{\rs}[1]{\text{rs}(#1)}
\newcommand{\grid}[2]{\Gamma}
\newcommand{\pgraph}[1]{G(#1)}
\title{Packing Boundary-Anchored Rectangles and Squares\thanks{A preliminary version of this paper appeared in the proceedings of 29th Canadian Conference on Computational Geometry (CCCG 2017)~\cite{BiedBMM17}. The work of TB and AM is supported in part by Natural Sciences and Engineering Research Council of Canada (NSERC).}}
\author[1]{Therese Biedl}
\author[1]{Ahmad Biniaz}
\author[2]{Anil Maheshwari}
\author[2]{Saeed Mehrabi}
\affil[1]{{\small David R. Cheriton School of Computer Science, University of Waterloo, Waterloo, Canada.

\texttt{biedl@uwaterloo.ca, ahmad.biniaz@gmail.com}}
}
\affil[2]{{\small School of Computer Science, Carleton University, Ottawa, Canada.

\texttt{anil@scs.carleton.ca, saeed.mehrabi@carleton.ca}}
}
\date{}
\begin{document}

\maketitle

\begin{abstract}
Consider a set $P$ of $n$ points on the boundary of an axis-aligned square $Q$. We study the \emph{boundary-anchored packing} problem on $P$ in which the goal is to find a set of interior-disjoint axis-aligned rectangles in $Q$ such that each rectangle is \emph{anchored} (has a corner at some point in $P$), each point in $P$ is used to anchor at most one rectangle, and the total area of the rectangles is maximized. Here, a rectangle is anchored at a point $p$ in $P$ if one of its corners coincides with $p$. In this paper, we show how to solve this problem in time linear in $n$, provided that the points of $P$ are given in sorted order along the boundary of $Q$. We also consider the problem for anchoring \emph{squares} and give an $O(n^4)$-time algorithm when the points in $P$ lie on two opposite sides of $Q$.
\end{abstract}

\section{Introduction}
\label{sec:introduction}
Let $Q$ be an axis-aligned square in the plane, and let $P$ be a set of points in $Q$. Call a rectangle $r$ {\em anchored} at a point $p\in P$ if $p$ is a corner of $r$. The \emph{anchored rectangle packing} (ARP) problem is to find a set $S$ of interior-disjoint axis-aligned rectangles in $Q$ such that each rectangle in $S$ is anchored at some point in $P$, each point in $P$ is a corner of at most one rectangle in $S$, and the total area of the rectangles in $S$ is maximized; see Figure~\ref{problem-fig}(a). It is not known whether this problem is \textsc{NP}-hard. The best known approximation algorithm for this problem achieves ratio $7/12-\varepsilon$ due to Balas et al.~\cite{BalasDT16}, who also studied several variants of this problem.

In this paper, we study a variant of the anchored packing problem in which all the points of $P$ lie on the boundary of $Q$. We refer to this variant as the \emph{boundary-anchored rectangle packing (BARP)} problem when the anchored objects are rectangles (see Figure~\ref{problem-fig}(b)), while when we require to anchor \emph{squares} instead of rectangles, we call the problem the \emph{boundary-anchored square packing (BASP)} problem. We first present an algorithm that solves the BARP problem in linear time, provided that the points of $P$ are given in sorted order along the boundary of $Q$ (Section~\ref{algorithm-section}). Despite the simplicity of our algorithm, its correctness proof is non-trivial (Section~\ref{correctness-section}). Then, we consider the BASP problem and give an $O(n^4)$ algorithm for this problem when the points in $P$ are on two opposite sides of $Q$ (Section~\ref{sec:squares}).

\begin{figure}[t]
	\centering
	\setlength{\tabcolsep}{0in}
	$\begin{tabular}{cc}
	\multicolumn{1}{m{.5\columnwidth}}{\centering\includegraphics[width=.38\columnwidth]{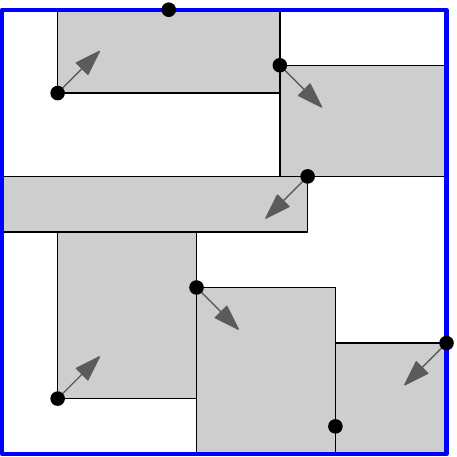}}
	&\multicolumn{1}{m{.5\columnwidth}}{\centering\includegraphics[width=.38\columnwidth]{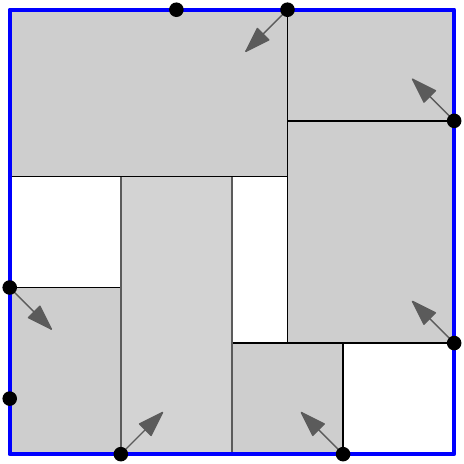}}
	\\
	(a)&(b)
	\end{tabular}$
	\caption{Instances and (non-optimal) solutions of (a) the ARP problem, and (b) the BARP problem.}
	\label{problem-fig}
\end{figure}

\paragraph{Related results.} The rectangle packing problem is related to strip packing and bin packing problems, which are well-known optimization problems in computational geometry. Rectangle packing problems have applications in map labeling~\cite{KakoulisT13, KreveldSW99}. Balas et al.~\cite{BalasDT16} studied several variants of the anchored packing problem; namely, the {\em lower-left anchored rectangle packing} problem in which points of $P$ are required to be on the lower-left corners of the rectangles in $R$, the {\em anchored square packing} problem in which every anchored rectangles is required to be a square, and the {\em lower-left anchored square packing} problem which is a combination the two previous problems. For the lower-left rectangle packing problem, Freedman~\cite{Tutte69} conjectured that there is a solution that covers $50\%$ of the area of $Q$. The best known lower bound of $9.1\%$ of the area of $Q$ is due to Dumitrescu and T\'{o}th~\cite{DumitrescuT15}. Balas et al.~\cite{BalasDT16} presented approximation algorithms with ratios $(7/12-\varepsilon)$ and $5/32$ for anchored rectangles and anchored square, respectively. They also presented a 1/3-approximation algorithm for the lower-left anchored square packing problem, and proved that the analysis of the approximation factor is tight. Balas and T\'{o}th~\cite{BalasT16} studied the combinatorial structure of maximal anchored rectangle packings and showed that the number of such distinct packings with the maximum area can be exponential in the number $n$ of points of $P$; they give an exponential upper bound of $2^n C_n$, where $C_n$ denotes the $n$th Catalan number.

Finally, Iturriaga and Lubiw~\cite{IturriagaL03} studied the \emph{elastic labeling} in which we are given a set of points on the boundary of a rectangle and the goal is to anchor rectangles to these points such that no two of them overlap. Here, elastic means that the label can have varying width and height, but each rectangle has a pre-specified fixed area. This problem is different than BARP as for the objectives of the two problems for instance, or having no pre-specified restriction on the area of the anchored rectangles in the latter.

\section{Boundary-Anchored Rectangles}
\label{algorithm-section}
In this section, we give a linear-time algorithm for the BARP problem. Before describing the algorithm, we first briefly argue that BARP is solvable in polynomial time.

\paragraph{An outline.} It is easy to see~\cite{BalasDT16} that in any rectangle packing the boundaries of rectangles must lie on the {\em grid} $\grid{P}{Q}$: the union of the boundary of $Q$ and the extension of inward rays from all points until they hit the opposite boundary. For each point $p\in P$, there are $O(n^2)$ potential rectangles of $\grid{P}{Q}$ anchored at $p$, and so we have a total of $O(n^3)$ candidate rectangles from which we must pick an independent set (with respect to their intersection graph) such that the sum of the weights (defined to be the area of each rectangle) is maximized. If all points are on the boundary, then it is easy to represent each rectangle as a {\em string} (i.e., a Jordan curve) such that all strings have a point on the infinite face and two strings intersect if and only if not both rectangles should be taken; see Figure~\ref{fig:onePointStrings}. This class of graphs is known as the \emph{outer-string graphs} for which it is known that maximum-weight independent set is solvable in $O(N^3)$ time, where $N$ denotes the number of segments in a geometric representation of the input graph~\cite{KeilMPV17}. As such, BARP is solvable in $O(n^9)$ time, but this is rather slow.

\begin{figure}[t]
	\centering
	\includegraphics[width=.40\textwidth]{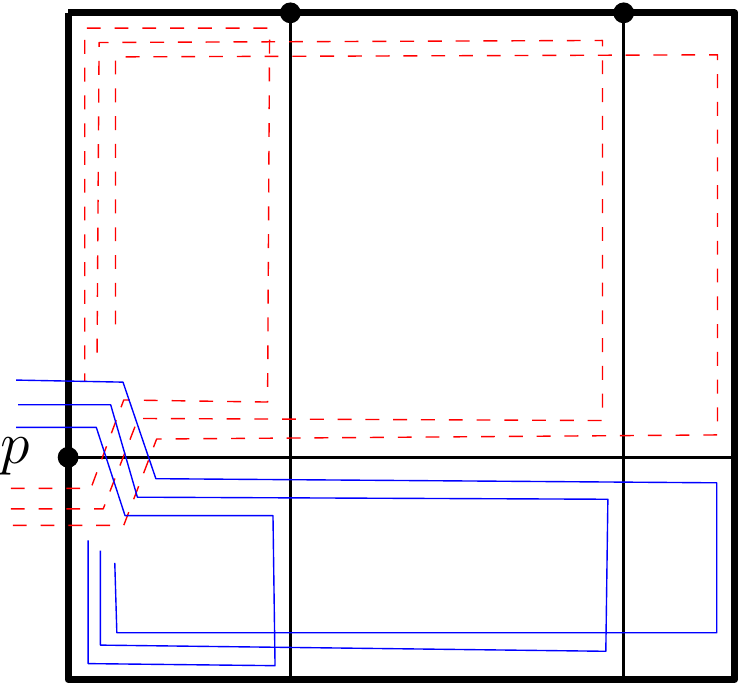}
	\caption{BARP can be solved via maximum-weight independent set in an outer-string graph.}
	\label{fig:onePointStrings}
\end{figure}

In this section, we give key insights that lead to a much faster algorithm. Define a {\em cell} to be a maximal rectangle not intersected by lines of the grid $\grid{P}{Q}$. Given an optimum solution $S$, define a {\em hole} of $S$ to be a maximal connected region of $Q$ that is not covered by $S$, see Figure~\ref{grid-fig}(b). We show the following in Section~\ref{correctness-section}:
\begin{theorem}
\label{ins:standardFormOtimal}
An optimal solution $S$ either covers all of $Q$, or it has exactly one hole which is a single cell.
\end{theorem}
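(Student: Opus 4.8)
The plan is to argue by a local exchange argument. Fix an optimal solution $S$ and suppose it does not cover all of $Q$; write $W=Q\setminus\bigcup S$ for the uncovered region. We use the fact recalled above that the boundary of every rectangle of $S$ lies on the grid $\grid{P}{Q}$, so $W$ is a union of cells; we also observe that every hole is simply connected, since each rectangle of $S$ has a corner on $\partial Q$ and hence cannot be enclosed by empty space, so no hole can have a ``donut hole''. Thus each hole is an orthogonal polygon all of whose interior angles are $90^\circ$ or $270^\circ$. Assuming for contradiction that $W$ is not a single cell, we will produce a solution of strictly larger area.

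First I would show that every hole is a rectangle. If a hole $H$ is not a rectangle it has a reflex vertex $u$. A reflex vertex cannot lie on $\partial Q$: near any boundary point the region $H\subseteq Q$ subtends at most $180^\circ$. Hence $u$ is interior to $Q$ and in particular $u\notin P$. Exactly three of the four quadrants around $u$ lie in $H$; the fourth (the ``notch'') is covered, and the rectangle $R_u$ covering the notch near $u$ must have a corner exactly at $u$, since it cannot cross any of the three grid segments emanating from $u$ that bound $H$. Because $u\notin P$, $R_u$ is anchored at one of its other three corners, so at least one edge of $R_u$ incident to the corner antipodal to the anchor may be pushed outward without un‑anchoring $R_u$, and such a push moves that edge into one of the three $H$‑quadrants at $u$. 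Enlarging $R_u$ in this way strictly increases the area --- provided the swept slab is entirely empty, which is the delicate point discussed below.

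Granting that every hole is a rectangle, I would next rule out multi‑cell holes and, finally, more than one hole. If a rectangular hole $H=[x_1,x_2]\times[y_1,y_2]$ is crossed by an internal grid line $x=m$, that line is the inward ray of a point $q\in P$ on the top or bottom side of $Q$; if $H$ meets that side then $q\in\partial H$, the sliver of $H$ just inside $q$ is empty, so $q$ is unused and a thin rectangle anchored at $q$ reaching into $H$ improves $S$; if $H$ does not meet that side, I would instead examine a corner of $H$ and the rectangle(s) of $S$ bordering $H$ there, one of which (by an argument analogous to the reflex case) has a free edge bordering $H$ that can be enlarged into $H$. Hence each hole is a single cell. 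For uniqueness, if $S$ had two single‑cell holes $c_1$ and $c_2$, then, using that each side of $c_i$ lies on $\partial Q$ or on an edge of a rectangle anchored at a corner of $c_i$, I would slide the rectangles surrounding one hole area‑preservingly until its cell becomes adjacent to the other, producing an optimal solution with a multi‑cell or non‑rectangular hole --- contradicting the previous steps.

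The hard part is the ``rigidity'' of rectangles in every enlargement step: a rectangle edge that borders $H$ may do so only along part of its length, so pushing the whole edge outward would overlap a neighbouring rectangle. Making the exchange work therefore requires either an extremal choice of the reflex vertex or cell in question (for instance, working with the topmost, then leftmost, empty cell, so that the slab to be swept is forced to be empty), or first normalizing $S$ by area‑preserving slides so that rectangle edges align with hole boundaries, or replacing a single enlargement by a combined ``shrink one rectangle, grow another'' move whose net area change is verified to be positive. Carrying out one of these carefully, together with the case bookkeeping for holes or points on $\partial Q$, for two holes meeting at a single grid point, and for rectangles with an edge on $\partial Q$, is the bulk of the argument.
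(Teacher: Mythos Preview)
Your proposal is a sketch with an admitted gap, not a proof. The crux is exactly where you flag it: at a reflex vertex $u$ of a hole, the rectangle $R_u$ with a corner at $u$ need not have its entire edge bordering $H$, so pushing that edge outward may overlap a neighbour. Your suggested remedies (extremal choice of cell, area-preserving slides, combined shrink/grow moves) are not carried out, and each of them runs into the same obstruction one level down: the neighbouring rectangle blocking the push may itself be pinned by further neighbours, and there is no obvious termination to this cascade. The same difficulty recurs in your multi-cell step (the ``argument analogous to the reflex case'' is the reflex case) and your uniqueness step (``slide the rectangles surrounding one hole area-preservingly'' is undefined and generally impossible without disturbing other rectangles).

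The paper sidesteps all of this by first normalizing $S$ differently: among optimal solutions, take one with the fewest \emph{max-segments} (maximal collinear runs of rectangle edges), and show that in such a solution every max-segment has an endpoint in $P$ that anchors a rectangle. With this normalization, a careful case analysis (Lemma~\ref{degree-lemma}) shows every interior vertex of the rectangle-boundary graph has degree at least~$3$; a reflex vertex of a hole would have degree~$2$, so holes are automatically rectangles. For uniqueness and the single-cell property, the paper introduces edge directions (each grid line is oriented from its end in $P$ toward the opposite side) and proves a \emph{hole-condition}: every interior corner of the hole has an incoming edge along $H$. The key move is then constructive (Lemma~\ref{lem:can_realize}): any rectangle on the grid satisfying the hole-condition can be realized as the \emph{sole} hole using at most four rectangles anchored at the points defining its sides. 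Uniqueness is then immediate (cover $Q\setminus H$), and if $H$ were bisected by a grid line, one of the two pieces would still satisfy the hole-condition, contradicting optimality. None of these ingredients --- the max-segment normalization, the degree lemma, the edge orientation, or the realization lemma --- appears in your outline, and they are what make the local-exchange intuition rigorous.
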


It is quite easy to test whether all of $Q$ can be covered (see Lemma~\ref{entire-Q-lemma}). In particular, this is always feasible if two points have the same $x$-coordinate or the same $y$-coordinate. In consequence, for the following discussion we assume that no two points have the same $x$- or $y$-coordinate. If we cannot cover all of $Q$, then we want to minimize the size of the hole. However, there are a quadratic number of cells, and more crucially, not all cells are feasible; i.e., could be holes. The second key result is therefore the following (by Theorem~\ref{alg-thr}): 
\begin{lemma}
\label{lem:testForCoverACell}
Given the (at most four) points defining the boundary of a cell $\psi$, we can test in $O(1)$ time whether some packing covers $Q-\psi$.
\end{lemma}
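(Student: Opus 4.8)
The plan is to reduce the question "does some packing cover $Q-\psi$?" to a small, explicitly describable configuration test. Fix the cell $\psi$; it is bounded by at most four grid lines, each emanating inward from a point of $P$ (or coinciding with a side of $Q$). Write $\psi = [x_1,x_2]\times[y_1,y_2]$, where each of $x_1,x_2,y_1,y_2$ is either a coordinate of $Q$'s boundary or the $x$- or $y$-coordinate of one of the (at most four) defining points. The key structural observation to exploit is that in any packing covering $Q-\psi$, the four rectangles (or fewer, if $\psi$ touches the boundary of $Q$) incident to the four sides of $\psi$ are forced into a very constrained "pinwheel" arrangement around $\psi$: each such rectangle has one side flush against a side of $\psi$ and extends outward, and because the rest of $Q$ must be tiled by anchored rectangles, these four rectangles must themselves be anchored and must fit together consistently at the four corners of $\psi$. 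I would first enumerate the $O(1)$ combinatorial types of how rectangles can abut $\psi$ (essentially the pinwheel orientation and which of the four defining points anchors which rectangle), and for each type write down the resulting constraint.

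Next I would argue that, once the four boundary-of-$\psi$ rectangles are fixed, covering the remainder of $Q$ decomposes into a bounded number of independent sub-instances, each of which is again a boundary-anchored cover problem but on a region with a convenient monotone structure (an "L-shaped" or rectangular strip whose anchor points lie on its boundary). For these sub-instances I would invoke a clean covering criterion — the analogue of Lemma \ref{entire-Q-lemma} but for a rectangle/strip with prescribed boundary points — which itself reduces to checking a simple condition on consecutive points (e.g., that one can always sweep and assign each maximal gap to an adjacent anchor). Crucially, after removing $\psi$ the number of "relevant" points is $O(1)$: only the points whose inward rays bound $\psi$ or its four incident rectangles matter, because everything else can be covered greedily by a canonical tiling (sweeping from the far sides of $Q$ inward). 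So the entire test collapses to evaluating a fixed Boolean formula over the coordinates of the (at most four) defining points, which takes $O(1)$ time. I would make the link to Theorem \ref{alg-thr} explicit here: that theorem presumably gives the precise characterization, and Lemma \ref{lem:testForCoverACell} is the statement that the characterization is $O(1)$-checkable.

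The main obstacle I anticipate is the case analysis for the pinwheel around $\psi$ together with the degenerate positions of $\psi$: when $\psi$ lies in a corner of $Q$ (only two incident rectangles), along a side of $Q$ (three incident rectangles), or when some of the defining points coincide in one coordinate with $Q$'s corners. Each degeneracy changes which sub-instances appear and which constraints are vacuous, and one must check that no combination of degeneracies produces an instance that is not $O(1)$-describable. A secondary subtlety is showing that the "canonical greedy tiling of everything outside the four incident rectangles" is without loss of generality — i.e., that if any packing covers $Q-\psi$ then the canonical one does too; this is where I would lean on the exchange/normal-form arguments already developed for Theorem \ref{ins:standardFormOtimal} in Section \ref{correctness-section}. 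Modulo those structural facts, the counting that yields $O(1)$ is routine.
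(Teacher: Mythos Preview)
Your pinwheel intuition is exactly the right picture, and it aligns with what the paper does. But you overcomplicate the construction by assuming that after placing the (at most four) rectangles abutting the sides of $\psi$, there is still a ``remainder'' of $Q$ to cover via sub-instances and a greedy sweep. In the paper's approach there is no remainder: the $i\le 4$ rectangles anchored at the defining points, chosen with the correct pinwheel orientation, already tile all of $Q\setminus\psi$. This is the content of Lemma~\ref{lem:can_realize}. So your decomposition into L-shaped strips, the ``only $O(1)$ relevant points'' argument, and the canonical-tiling exchange lemma are all unnecessary; if you tried to carry them out you would find the sub-instances are empty.

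The paper packages the $O(1)$ test as follows. Direct each interior grid segment away from its defining point in $P$. The test is then simply whether $\psi$ satisfies the \emph{hole-condition}: every interior corner of $\psi$ has an incoming edge along the boundary of $\psi$. This depends only on which side of $Q$ each of the (at most four) defining points lies on, so it is a constant-size Boolean check. Sufficiency is Lemma~\ref{lem:can_realize} (the pinwheel covers $Q\setminus\psi$); the converse direction, that a feasible hole of an optimal packing must satisfy the hole-condition, is Lemma~\ref{lem:one_hole_all}. Your plan would eventually arrive at the same Boolean formula, but via a detour through sub-instances that do not actually arise; recognising that the pinwheel already exhausts $Q\setminus\psi$ collapses the argument and removes the degeneracy case analysis you were anticipating.
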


This immediately gives an $O(n^2)$ algorithm to find the best solution of type $Q-\psi$: consider the cells in order, test whether they are feasible and then find the corresponding packing that maximizes the area among those that are feasible. However, it is not necessary to test each cell individually. We can characterize exactly when a cell $\psi$ is feasible, based solely on where the supporting lines of $\psi$ (which are either the boundary of $Q$ or rays emanating from some points) have their endpoints. Hence, we do not need to look at individual cells, but at the list of points on the four sides, to find the minimum area hole. In the following, we describe this in more details.

We write $P_\mathcal{B}$ (resp., $P_\mathcal{L}, P_\mathcal{T}$ and $P_\mathcal{R}$) for the points of $P$ on the bottom (resp., left, top and right) side. For a point $p$ in the plane, we denote by $x(p)$ and $y(p)$ the $x$- and $y$-coordinates of $p$, respectively. The following theorem, which we will prove in Section~\ref{correctness-section}, characterizes possible optimal solutions; Figure~\ref{fig:can_realize} on page \pageref{fig:can_realize} illustrates these configurations.
\begin{theorem}
	\label{alg-thr}
	Any BARP instance has an optimal solution $S$ with $i\leq 4$ rectangles. Moreover (up to rotating the instance by a multiple of $90^\circ$ and/or reflecting horizontally) the anchor-points $p_1,\dots,p_i$ used by $S$ satisfy one of the following:
	\begin{enumerate}
\vspace*{-2mm}
\itemsep -3pt
		\item $i=1$, and $p_1$ is the leftmost point of $P_\mathcal{T}\cup P_\mathcal{B}$.
		\item $i=2$, and one of the following holds:
		\begin{enumerate}[label = $($\alph*$)$]
\vspace*{-2mm}
\itemsep -1pt
			\item $p_1$ is the bottommost point of $P_\mathcal{L}$ and $p_2$ is the leftmost point of $P_\mathcal{T}\cup P_\mathcal{B}$, or
			\item $p_1$ and $p_2$ are the two points of $P_\mathcal{T}\cup P_\mathcal{B}$ with the closest $x$-coordinates.
		\end{enumerate}
		\item $i=3$,  $p_1\in P_\mathcal{B}$ and $p_2\in P_\mathcal{T}\cup P_\mathcal{B}$ have closest $x$-coordinates with
			$x(p_1)<x(p_2)$, and $p_3$ is the lowest point in $P_\mathcal{L}$.
		\item $i=4$, $p_1\in P_\mathcal{L}$ and $p_3\in P_\mathcal{R}$ have closest $y$-coordinates with $y(p_1)>y(p_3)$, 
			and $p_2\in P_\mathcal{T}$ and $p_4\in P_\mathcal{B}$ have the closest $x$-coordinates with $x(p_4)<x(p_2)$.
	\end{enumerate}
\end{theorem}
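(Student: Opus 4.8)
The plan is to start from an optimal packing $S$ and simplify it, never losing area, until it has at most four rectangles arranged as in one of the four cases. An optimal packing exists since every rectangle of a packing has its boundary on the grid $\grid{P}{Q}$, so only finitely many packings are combinatorially distinct; among all optimal packings I would fix one, $S$, that uses the fewest rectangles, and it then suffices to show this $S$ has $i\le 4$ rectangles whose anchors satisfy one of cases~1--4. By Theorem~\ref{ins:standardFormOtimal}, either $S$ covers $Q$ or the uncovered region is a single cell $\psi$. The first possibility is disposed of through Lemma~\ref{entire-Q-lemma}: whenever $Q$ admits a tiling by anchored rectangles there is one using at most two rectangles (one rectangle if a point of $P$ is a corner of $Q$, two rectangles split along a line through two points that share a coordinate), which is an instance of case~1, or of case~2(b) with an empty hole. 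So from now on assume $S$ has exactly one hole, the cell $\psi=[a_1,a_2]\times[b_1,b_2]$.

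The core of the argument is a re-tiling step: I claim $Q\setminus\psi$ can always be re-tiled by at most four \emph{anchored} rectangles, and then minimality of $S$ forces $S$ itself to be such a re-tiling. The key observation is that each side of $\psi$ lies on a line of $\grid{P}{Q}$ which is either a side of $Q$ or the inward ray of a \emph{unique} point of $P$, and in the latter case that point lies on $\partial Q$, at the far endpoint of the ray. Depending on how many sides of $\psi$ lie on $\partial Q$, the region $Q\setminus\psi$ is a single rectangle (a strip hole, which reduces the packing to one rectangle, i.e.\ case~1 or~2(b)), an L-shape (a corner cell; two rectangles), a U-shape (a one-side edge cell; three rectangles), or a rectangular annulus (an interior cell; a pinwheel of four rectangles). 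In each case I would exhibit a canonical tiling chosen so that every new rectangle has, as one of its corners, the point of $P$ that carries the extended side of $\psi$ bounding it — for the annulus, for instance, one picks whichever of the two pinwheel orientations makes each of the four defining points of $\psi$ a corner of one of the four rectangles, and then reads off a system of distinct representatives so that distinct rectangles get distinct anchors (defining points of distinct sides of $\psi$ have distinct coordinates, and the assumption that no two points of $P$ share a coordinate — the genuine ties being folded into the full-coverage case above — rules out one point having to anchor two rectangles). Since the new tiling covers exactly $Q\setminus\psi$ it is again optimal, so minimality of $|S|$ yields $i\le 4$ and identifies the anchors of $S$ with the points defining the sides of $\psi$.

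It remains to sharpen this into the stated configurations. Reading the anchors off the canonical tiling already gives the coarse structure: an interior cell is bounded on its two vertical sides by points of $P_\mathcal{T}\cup P_\mathcal{B}$ and on its two horizontal sides by points of $P_\mathcal{L}\cup P_\mathcal{R}$ (case~4), a one-side edge cell by two such points on the vertical sides and one on the horizontal side (case~3), a corner cell by one of each (case~2(a)), and a strip or empty hole by at most two points of $P_\mathcal{T}\cup P_\mathcal{B}$ (cases~1 and 2(b)). To obtain the precise descriptions (``leftmost point of $P_\mathcal{T}\cup P_\mathcal{B}$'', ``closest $x$-coordinates'', ``lowest point of $P_\mathcal{L}$'', and so on) and to fix the orientation — so that, say, $p_1\in P_\mathcal{B}$ in case~3 and $p_2\in P_\mathcal{T}$, $p_4\in P_\mathcal{B}$ in case~4, rather than some reflected assignment — I would use two exchange arguments based on optimality of $S$: first, that among the feasible holes of a fixed type the hole of $S$ has minimum area, so its bounding points can be slid inward to consecutive positions without destroying feasibility or decreasing area; second, that whenever a side of $\psi$ could be carried by the inward ray of a point on either of two sides of $Q$, one may switch the support to the side prescribed by the theorem. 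This last bundle — showing that the re-tiled rectangles are genuinely anchored with a conflict-free point-to-rectangle assignment, and that the greedy ``smallest hole'' choices stay feasible in the prescribed orientation — is the step I expect to be the real obstacle; it is exactly the delicate interaction between shrinking the hole and keeping it realizable, and it is what the constant-time feasibility test of Lemma~\ref{lem:testForCoverACell} ultimately encodes.
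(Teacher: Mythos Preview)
Your outline follows the same arc as the paper: take an optimal $S$ with fewest rectangles, invoke Theorem~\ref{ins:standardFormOtimal} to reduce to a single cell $\psi$, re-cover $Q\setminus\psi$ with at most four anchored rectangles, and then argue that optimality forces the extremal (``closest pair'' / ``leftmost'' / ``lowest'') choices. The full-coverage case and the bookkeeping with rotations and reflections are handled correctly.

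The genuine gap is the re-tiling step, and you have put your finger on it yourself. You write ``one picks whichever of the two pinwheel orientations makes each of the four defining points of $\psi$ a corner of one of the four rectangles,'' but it is \emph{not} true that one of the two orientations always works for an arbitrary cell. Take $P=\{(0.3,1),(0.7,1),(0,0.6),(1,0.4)\}$ and the interior cell $\psi=[0.3,0.7]\times[0.4,0.6]$: its left and right sides are both supported by points of $P_\mathcal{T}$, its top by $P_\mathcal{L}$, its bottom by $P_\mathcal{R}$. Neither pinwheel can be anchored at these four points, and in fact $Q\setminus\psi$ admits no anchored packing at all. So merely knowing that $\psi$ is a cell is not enough; you must know that \emph{this particular} $\psi$, the hole of the optimal $S$, has its defining points arranged compatibly with one pinwheel orientation.

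The paper isolates exactly this condition (the ``hole-condition'': every interior corner of $H$ has an incoming edge lying on $H$, where edges of the grid are directed away from their point in $P$) and proves it in two pieces. Lemma~\ref{lem:one_hole_all} shows, by an inductive guillotine-cut argument built on Lemmas~\ref{grid-lemma}--\ref{lem:guillotine}, that the hole of an optimal $S$ with fewest rectangles necessarily satisfies the hole-condition; Lemma~\ref{lem:can_realize} shows conversely that any rectangle on $\grid{P}{Q}$ satisfying the hole-condition can be realized by at most four anchored rectangles (this is your pinwheel). Once both directions are in hand, your two exchange arguments go through cleanly: all holes of the same oriented type are feasible, so the optimum uses the minimum-area one, and the hole-condition itself pins down on which side of $Q$ each defining point sits. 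What you describe as ``the real obstacle'' is precisely Lemma~\ref{lem:one_hole_all}, and its proof is not a local exchange but a structural argument about max-segments and vertex degrees in $G(S)$; this is the missing idea in your proposal.
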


\paragraph{Algorithm.} Our algorithm proceeds as follows.  For each of the four rotations, for each of the two reflections, and for each rule 1, 2(a), 2(b), 3, and 4 in Theorem~\ref{alg-thr}, compute the corresponding point set.   Each of these up to 40 point sets defines a cell $H$, and a packing that covers $Q-H$ (see also Lemma~\ref{lem:can_realize_boundary}).
The algorithm returns the one that has the smallest hole $H$.

Having $P_\mathcal{L},P_\mathcal{T},P_\mathcal{R}$, and $P_\mathcal{B}$ sorted along the boundary of $Q$, 
we can also compute sorted lists of $P_\mathcal{L}\cup P_\mathcal{R}$ and $P_\mathcal{T}\cup P_\mathcal{B}$ in linear time.
The closest pair within each or between two of them can be computed in linear time.

The correctness will be proved in Section~\ref{correctness-section}. The proof does not use that $Q$ is a square, only that it
is an axis-aligned rectangle.  We hence have:
\begin{theorem}
	\label{thm:mainResult}
	The boundary anchored rectangle packing problem for $n$ points, given in sorted order on the boundary of a rectangle, can be solved in $O(n)$ time.
\end{theorem}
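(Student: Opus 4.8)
The plan is to treat Theorem~\ref{thm:mainResult} as a running-time statement layered on top of Theorems~\ref{ins:standardFormOtimal} and~\ref{alg-thr} and Lemmas~\ref{entire-Q-lemma}, \ref{lem:testForCoverACell}, and~\ref{lem:can_realize_boundary}. Correctness is essentially immediate: by Theorem~\ref{alg-thr} every BARP instance admits an optimal solution that either covers $Q$ entirely (a case we detect with Lemma~\ref{entire-Q-lemma}), or uses an anchor set which, after rotating the instance by some multiple of $90^\circ$ and possibly reflecting horizontally, matches one of rules 1, 2(a), 2(b), 3, 4. The algorithm enumerates all $4\cdot 2\cdot 5 = 40$ (rotation, reflection, rule) combinations, so one of them reproduces the anchor set of an optimal solution; for that combination the packing associated by Lemma~\ref{lem:can_realize_boundary} covers $Q$ minus a single cell of minimum possible area. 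Returning the enumerated packing with the smallest hole --- or the full cover when Lemma~\ref{entire-Q-lemma} reports it is available --- therefore yields an optimal solution, and this much already follows without rehearsing the arguments deferred to Section~\ref{correctness-section}.

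For the running time, I would begin with a single $O(n)$ preprocessing pass. From the given cyclic sorted order of $P$ along $\partial Q$, the four side-lists $P_\mathcal{B}, P_\mathcal{L}, P_\mathcal{T}, P_\mathcal{R}$ are contiguous sublists, each of which becomes monotone in the coordinate that matters after at most one reversal; merging them yields the $x$-sorted list of $P_\mathcal{T}\cup P_\mathcal{B}$ and the $y$-sorted list of $P_\mathcal{L}\cup P_\mathcal{R}$ in linear time. From these I precompute, once and for all, every primitive quantity the $40$ configurations can request: the leftmost / bottommost / lowest point on a prescribed side or union of sides; the closest consecutive pair within each sorted list; and the closest cross-side pair (one endpoint in $P_\mathcal{B}$, the other in $P_\mathcal{T}$, with the bottom endpoint to the left --- and symmetrically for $P_\mathcal{L}$ versus $P_\mathcal{R}$). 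Because a closest cross-side pair with the required orientation must be consecutive in the appropriate merged list, each such quantity is obtained by a single scan, and since there are only constantly many of them the whole precomputation is $O(n)$. Rotations merely permute the roles of the four sides and reflections only reverse orders, so the same precomputed data serves all eight symmetry classes.

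Given the precomputed data, each of the $\le 40$ configurations selects its (at most four) anchor points in $O(1)$ time; by Lemma~\ref{lem:can_realize_boundary} these anchors determine a cell $H$ together with a packing of $Q-H$, and the coordinates of the anchors determine $H$ and $\mathrm{area}(H)$ in $O(1)$ (this is exactly the constant-time feasibility test of Lemma~\ref{lem:testForCoverACell} made constructive). The algorithm retains the configuration minimizing $\mathrm{area}(H)$, compares it against the outcome of the $O(n)$ full-cover test of Lemma~\ref{entire-Q-lemma}, and outputs the winner, reconstructing its $\le 4$ explicit rectangles in $O(1)$. Summing up, the cost is $O(n)$ for preprocessing, $O(1)$ for each of a constant number of configurations, and $O(n)$ for the full-cover test, for $O(n)$ in total; note, as remarked in the excerpt, that nothing in the argument uses that $Q$ is a square rather than an axis-aligned rectangle.

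The delicate part of the write-up is the linear-time extraction of the cross-side closest pairs needed by rules 3 and 4: unlike a within-a-side closest pair these are not adjacent in a single sorted list, so one has to justify that an optimal cross-side pair (restricted to the prescribed orientation, e.g.\ the bottom-side point lying to the left of the top-side point) appears as a consecutive pair in the relevant merged list, and then handle the orientation constraint correctly while scanning. A second point worth making explicit is that every enumerated configuration is genuinely realizable --- i.e.\ yields an interior-disjoint anchored packing covering $Q$ minus its cell --- together with the $O(1)$ reconstruction of the rectangles from the anchor coordinates; both are what Lemma~\ref{lem:can_realize_boundary} must provide. Everything else is finite case bookkeeping.
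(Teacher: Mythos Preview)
Your proposal is correct and follows essentially the same approach as the paper: enumerate the $\leq 40$ (rotation, reflection, rule) configurations from Theorem~\ref{alg-thr}, precompute the merged sorted lists $P_\mathcal{T}\cup P_\mathcal{B}$ and $P_\mathcal{L}\cup P_\mathcal{R}$ and their closest pairs in $O(n)$, then handle each configuration in $O(1)$ and return the one with the smallest hole. The only minor deviation is that you treat the full-cover case as a separate $O(n)$ test via Lemma~\ref{entire-Q-lemma}, whereas the paper absorbs it into rules~(1) and~(2b) as the degenerate zero-area-hole case; this is harmless, and your added justification that an optimal oriented cross-side pair must be consecutive in the merged list is exactly the detail the paper elides when it asserts such pairs are found in linear time.
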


\section{Correctness of the Algorithm}
\label{correctness-section}

\begin{figure}[tb]
	\centering
	\setlength{\tabcolsep}{0in}
	$\begin{tabular}{cc}
	\multicolumn{1}{m{.5\columnwidth}}{\centering\includegraphics[width=.35\columnwidth]{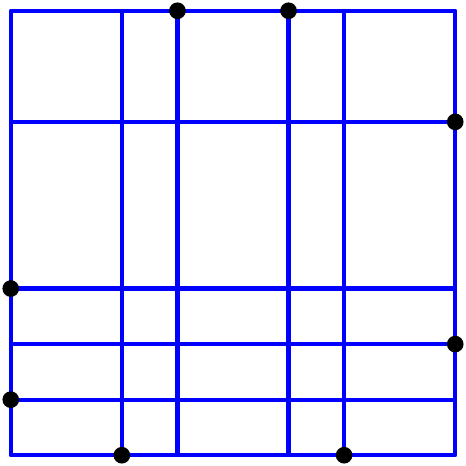}}
	&\multicolumn{1}{m{.5\columnwidth}}{\centering\includegraphics[width=.35\columnwidth]{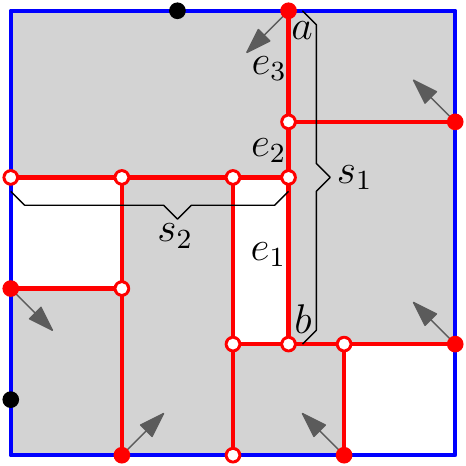}}
	\\
	(a)&(b)
	\end{tabular}$
	\caption{(a) The grid $\grid{P}{Q}$. (b) White regions are holes. Graph $\pgraph{S}$ is in red (thick); filled vertices are points of $P$. The max-segment $s_1$ is introduced while $s_2$ is not.}
	\label{grid-fig}
\end{figure}

We first consider the cases when the square $Q$ can be covered entirely by a packing.

\begin{obs} 
\label{obs:entireQ}
Assume one of the following holds.
	\begin{enumerate}[$($i$)$]
		\vspace*{-2mm}
		\itemsep -1pt
		\item there exists a point $p_1\in P$ on a corner of $Q$, or
		\item there exist two points $p_1,p_2\in P_\mathcal{L}\cup P_\mathcal{R}$ that have the same $y$-coordinates, or
		\item there exist two points $p_1,p_2\in P_\mathcal{T}\cup P_\mathcal{B}$ that have the same $x$-coordinates.
	\end{enumerate}
Then we can cover all of $Q$ with anchored rectangles.
\end{obs}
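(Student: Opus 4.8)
The plan is to handle the three cases separately, in each case exhibiting an explicit collection of interior-disjoint anchored rectangles whose union is all of $Q$. For case (i), if $p_1$ sits on a corner of $Q$, then the single rectangle equal to $Q$ itself is anchored at $p_1$ (its corner coincides with $p_1$), uses only the one point, and covers $Q$; this is immediate. The remaining two cases are symmetric under a $90^\circ$ rotation, so it suffices to treat case (ii): two points $p_1,p_2$ with $p_1\in P_\mathcal{L}$, $p_2\in P_\mathcal{R}$ (the subcase where both lie on the same vertical side is even easier — see below) and $y(p_1)=y(p_2)=:y_0$.

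In case (ii) I would draw the horizontal line $\ell$ at height $y_0$; it meets the left side at $p_1$ and the right side at $p_2$ and splits $Q$ into a bottom sub-rectangle $Q^-$ and a top sub-rectangle $Q^+$. Anchor one rectangle equal to $Q^-$ at $p_1$ (which is the top-left corner of $Q^-$) and one rectangle equal to $Q^+$ at $p_2$ (its bottom-right corner). These two rectangles are interior-disjoint, each anchored at a distinct point of $P$, and together they cover $Q$. If instead both $p_1,p_2$ lie on, say, the left side with $y(p_1)=y(p_2)$, then they coincide, contradicting that $P$ is a set of distinct points; so this degenerate sub-possibility does not arise, and the only genuine sub-case is the one just described (one point per vertical side). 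Case (iii) is obtained by exchanging the roles of the $x$- and $y$-axes: the vertical line through the common $x$-coordinate splits $Q$ into a left and a right piece, each anchored at one of $p_1,p_2$.

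I do not expect any real obstacle here: the statement is essentially a warm-up that records the easy coverable configurations before the substantive analysis. The only point requiring a line of care is making sure, in case (ii), that the two points are on \emph{opposite} vertical sides rather than the same one — but, as noted, distinctness of the points rules out the same-side possibility — and analogously for case (iii). Everything else is a one-line geometric construction, so the write-up will be short.
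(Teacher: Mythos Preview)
Your proposal is correct and follows exactly the same idea as the paper's proof: in case~(i) a single rectangle equal to $Q$ suffices, and in cases~(ii)--(iii) the line through the common coordinate splits $Q$ into two sub-rectangles, each anchored at one of the two points. The paper's write-up is simply terser (two sentences), and it does not bother to remark that the two points must lie on opposite sides---your observation that same-side would force $p_1=p_2$ is a harmless bit of extra care.
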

\begin{proof}
In case (i), one rectangle anchored at $p_1$ can cover all of $Q$.
In case (ii) and (iii), two rectangles anchored at $p_1,p_2$ can
cover all of $Q$.
\end{proof}

Since these conditions are easily tested, 
we assume for most of the remaining section that none of (i)--(iii) holds.    
(We will see that this implies that there must be a hole.)

We need some notation. Throughout this section, let $S$ be an optimal solution for the BARP problem. The term ``rectangle'' now means one of the rectangles used by $S$. Define $\pgraph{S}$ to be the graph whose vertices are the rectangle-corners that are not corners of $Q$, and whose edges are coincident with the rectangle-sides not on the boundary of $Q$; see Figure~\ref{grid-fig}(b). 

We define a {\em max-segment} of $\pgraph{S}$ to be a maximal chain $s$ of collinear edges of $\pgraph{S}$. We say that $s$ is {\em introduced} if at least one endpoint of $s$ belongs to $P$ and is used as anchor-point for some rectangle of $S$. For example, in Figure~\ref{grid-fig}(b), $s_1$ is introduced but $s_2$ is not. Every edge $e$ belongs to exactly one max-segment $s_e$; we say that $e$ is {\em introduced} if $s_e$ is. In Figure~\ref{grid-fig}(b), $e_1$, $e_2$ and $e_3$ are introduced. We already know~\cite{BalasDT16} that all boundaries of rectangles can be assumed to lie on the grid $\grid{P}{Q}$, but we need to strengthen this and prove the following: 
\begin{lemma}
	\label{grid-lemma}
	There exists an optimal solution $S$ such that all max-segments of $S$ are introduced. 
\end{lemma}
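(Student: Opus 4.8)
The plan is to take an optimal solution $S$ and modify it so that every max-segment becomes introduced, without decreasing the total covered area (hence keeping it optimal) and without increasing some secondary complexity measure that guarantees the process terminates. The natural measure is the number of max-segments of $\pgraph{S}$, or perhaps the total number of edges; I would argue that as long as there is a non-introduced max-segment, we can either slide it or delete it to strictly decrease this measure while preserving feasibility and optimality. Since the measure is a non-negative integer, the process ends, and at termination all max-segments are introduced.

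The key case analysis concerns a non-introduced max-segment $s$, say a vertical one for concreteness. By definition neither endpoint of $s$ is an anchor-point used by $S$ (an endpoint might lie on the boundary of $Q$, or be a ``T-junction'' interior rectangle corner, or in principle coincide with an unused point of $P$). The edges of $s$ form sides of rectangles lying immediately to its left and immediately to its right. First I would observe that $s$ can be translated horizontally: moving $s$ to the right shrinks the rectangles on its left and grows those on its right (and symmetrically to the left), and because no rectangle has $s$ as the side containing its anchor (that would make $s$ introduced), such a translation keeps every rectangle anchored at its original point and keeps all rectangles interior-disjoint, as long as $s$ stays within the slab bounded by the nearest grid lines/segments on either side. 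The area changes linearly in the displacement $t$: it is an affine function $A(t)$, so it is monotone (or constant) on the feasible interval. We push $s$ in the direction that does not decrease area until $s$ either merges with another max-segment, hits the boundary of $Q$, or reaches a point of $P$; if the area function is constant we are free to push either way.

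The main obstacle — and the part that needs the most care — is ensuring that some terminal position actually makes progress, i.e. that we can always reduce the number of max-segments (or reach a configuration where $s$ becomes introduced). When we push $s$ until it hits another collinear max-segment or the supporting line through an already-used anchor, the two max-segments merge into one (or $s$ disappears entirely because its rectangles on one side collapse to zero width), strictly decreasing the count. The delicate subcase is when pushing $s$ in the area-non-decreasing direction would collapse a rectangle to zero area: then we instead delete that rectangle (area only goes down by the collapsing amount, which is nonpositive in the limit — here one must check the affine area argument carefully to see the deletion is free or beneficial) and re-examine. A second subtlety: after moving $s$, other max-segments incident to its endpoints may themselves stop being maximal or get split/merged; I would handle this by always choosing $s$ to be an \emph{innermost} offending max-segment, or by arguing the potential function strictly drops regardless of these side effects. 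One should also confirm that a non-introduced max-segment both of whose endpoints are T-junctions can be eliminated: sliding it to either extreme merges it with a neighbor, so such segments never survive. Finally, after this cleanup every surviving max-segment has an endpoint that is a used anchor-point, which is exactly the claim; and since area never decreased, $S$ remains optimal.
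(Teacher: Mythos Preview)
Your proposal is correct and follows essentially the same route as the paper: slide a non-introduced max-segment in the area-nondecreasing direction (the area is affine in the displacement, so by optimality it is in fact constant) until it merges with another parallel max-segment or the boundary, strictly decreasing the max-segment count. The paper packages this as a one-shot contradiction (pick an optimal $S$ with the fewest max-segments) rather than an iterative procedure, which dispenses with your ``delicate subcases'': a rectangle collapsing to width zero \emph{is} the merge event (its opposite side is another max-segment), and no perpendicular max-segments are split, so the potential drop is immediate and your worries about innermost choices and side effects are unnecessary.
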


\begin{proof}
Let $S$ be an optimal solution that, among all optimal solutions, minimizes the number of max-segments.
Assume for contradiction that there exists a max-segment $s$ that is not introduced. 
After rotation we may assume that $s$ is horizontal.   Let $V$ be the vertical slab defined by the 
two vertical lines through the endpoints of $s$; see Figure~\ref{fig:sidesOfH}. 

	\begin{figure}[tb]
		\centering
		\includegraphics[width=.60\textwidth]{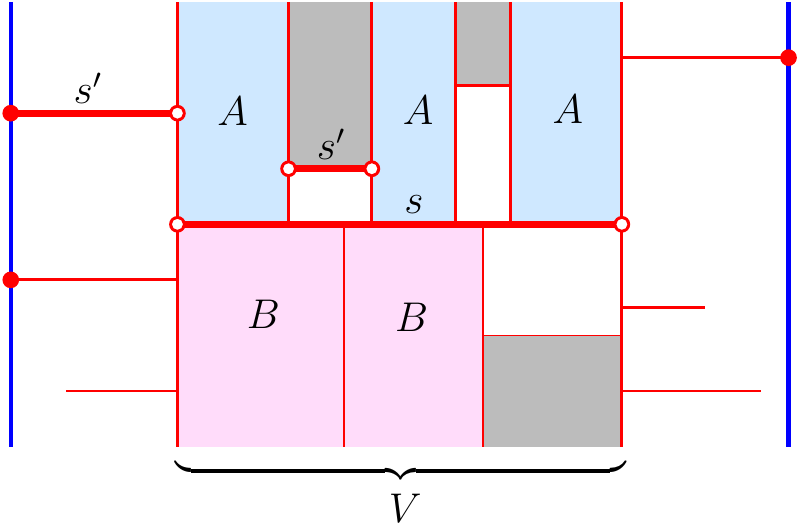}
		\caption{Illustration of the proof of Lemma~\ref{grid-lemma}.}
		\label{fig:sidesOfH}
	\end{figure}
	
Consider moving $s$ upward in parallel, i.e., shortening the rectangles $A$ with their bottom sides on $s$ and lengthening the rectangles $B$ with their top sides on $s$. Observe first that these rectangles indeed can be shortened/lengthened, because none of them can be anchored at a point on $s$: the only points of $s$ that are possibly in $P$ are its ends, but neither of them anchors a rectangle since $s$ is not introduced. If this move of $s$ increases the coverage, then $S$ was not optimal, a contradiction. If this decreases the coverage, then moving downward in parallel would increase the coverage, a contradiction. So the covered area must remain the same during the move. Shift $s$ up until it hits either the boundary of $Q$ or intersects some other horizontal max-segment of $\pgraph{S}$. If $s$ hits the boundary of $Q$, then $s$ disappears and will be deleted from $\pgraph{S}$. If $s$ intersects some other horizontal max-segment $s'$ of $\pgraph{S}$ (which may be inside $V$ or only share an endpoint with the translated $s$) then the two max-segments merge into one. Either way we decrease the number of max-segments, which contradicts the choice of $S$ and proves the lemma.
\end{proof}

From now on, without further mentioning, we assume that $S$ is an optimal solution
where all max-segments are introduced. We also assume that, among all such optimal solutions, $S$ minimizes the number of rectangles.
\begin{lemma}
	\label{degree-lemma}
	Every internal vertex of $\pgraph{S}$ has degree three or four.
\end{lemma}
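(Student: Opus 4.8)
The plan is to reason locally around an internal vertex $v$, exploiting that $\pgraph{S}$ consists of axis-parallel segments. Since only four axis directions emanate from $v$, we get $\deg(v)\le 4$ for free. For the lower bound, $v$ is by definition a corner of some rectangle $r\in S$, and since $v$ lies in the interior of $Q$, both sides of $r$ incident to $v$ contain the interior point $v$, hence are not on $\partial Q$ and therefore are edges of $\pgraph{S}$; so $\deg(v)\ge 2$. Everything then comes down to excluding $\deg(v)=2$, and for this I would classify the ``local picture'' at $v$: each of the four open quadrants around $v$ is either the interior of a unique rectangle of $S$ or part of a hole, and each of the four rays out of $v$ is an edge of $\pgraph{S}$ precisely when the two quadrants flanking it are distinct (two different rectangles, or a rectangle and a hole). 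A key sub-claim I would prove first is that no rectangle of $S$ can occupy three of the four quadrants at $v$, nor two diagonally opposite quadrants: in either situation the bounding box of two suitably chosen points of that rectangle contains a neighbourhood of $v$, forcing $v$ into the interior of the rectangle and contradicting that $v$ is a rectangle corner. Hence every rectangle meeting $v$ occupies one or two \emph{adjacent} quadrants.

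Now if $\deg(v)=2$, exactly two of the four flanking pairs are ``distinct'', and a short check (using the sub-claim) shows that, up to rotation and reflection, only two configurations survive: (a) the two edges are collinear and the local picture is $AABB$ or $AAHH$ — two rectangles, or one rectangle and a hole, each filling two adjacent quadrants; or (b) the two edges are perpendicular and the picture is $AHHH$, i.e.\ $v$ is a corner of a single rectangle $r$ — say its bottom-left corner — with the three other quadrants locally uncovered. Case (a) is immediate: the rectangle $A$ (whichever one fills two adjacent quadrants) has $v$ in the relative interior of one of its sides, hence $v$ is not a corner of $A$, and the only other rectangle present, if any, has $v$ in the relative interior of one of its sides as well; so $v$ is a corner of no rectangle, contradicting that it is a vertex of $\pgraph{S}$.

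Case (b) is the real work. Since $v$ is internal we have $v\notin P$, so $r$ is anchored at one of its three other corners; consequently at least one of $r$'s left and bottom sides does not carry the anchor, and I would choose to widen $r$ on that side (widen to the left if the anchor lies on the right side of $r$, downward if it lies on the top side). Before the widening I would use Lemma~\ref{grid-lemma} to pin down the neighbourhood: the vertical max-segment along $r$'s left side has $v$ as its lower endpoint (no edge leaves $v$ downward in this configuration) and, being introduced while $v$ is not an anchor, its upper endpoint is an anchor lying on the top side of $Q$; symmetrically for the horizontal max-segment along $r$'s bottom side. Then the surgery: widen $r$ by a small $\delta$ into the adjacent hole, compensating by shrinking by $\delta$ exactly those rectangles that abut the moved side of $r$ from the outside. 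Because the quadrant of $v$ on that side is uncovered, these abutting rectangles leave a hole inside the thin strip freed by the move, so the area gained by $r$ strictly exceeds the area lost by the others, contradicting optimality of $S$.

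The main obstacle is making this last surgery watertight, and I expect to spend most of the effort there. Two things can go wrong: an abutting rectangle might be anchored on the very side we need to translate, destroying its anchor; or it might \emph{overhang} the end of $r$'s side, so that shrinking it frees area outside the thin strip and the area accounting no longer closes. I would neutralise the first by the choice of widening direction above (together with the fact, read off from Lemma~\ref{grid-lemma}, that the relevant anchors sit on $\partial Q$ away from the moved side); and the second either by first establishing — possibly via the minimality of the number of rectangles of $S$, which is available as a backstop — that no such overhang can touch a degree-$2$ corner, or by coupling the widening with a simultaneous parallel shift of a whole row/column of rectangles, in the spirit of the proof of Lemma~\ref{grid-lemma}, or simply by choosing $v$ extremally (say the lowest internal degree-$2$ corner, ties broken by leftmost) so the obstruction cannot arise. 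Once the surgery is pushed through, $\deg(v)=2$ is impossible and the lemma follows.
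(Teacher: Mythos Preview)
Your local classification is clean and correct: the bound $\deg(v)\le 4$ is trivial, case~(a) is disposed of exactly as you say, and your sub-claim ruling out a rectangle in three or two diagonally opposite quadrants is the right way to reduce everything to the single configuration~(b). The observation via Lemma~\ref{grid-lemma} that the two max-segments through $v$ run all the way to anchors on the top and right sides of $Q$ is also correct and is exactly what the paper extracts at this point.

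The gap is entirely in case~(b), and it is the overhang obstruction you yourself flag. Concretely, in the paper's normalization ($v$ the bottom-right corner of $r_1$, anchor of $r_1$ on the left of $Q$), the rectangle $r_2$ that abuts $r_1$ from the right above the hole is anchored at a point $c_P\in P_{\mathcal T}$, so its left side runs from the hole boundary all the way to the top of $Q$. If $r_1$'s top is strictly below the top of $Q$, then $r_2$ genuinely overhangs $r_1$, and widening $r_1$ while shrinking $r_2$ can lose more than it gains. None of your three suggested fixes closes this: the extremal choice of $v$ says nothing about the corner where the overhang occurs (that corner has degree~$3$, not~$2$); minimality of $|S|$ does not forbid the overhang (it is used in the paper for a different purpose, to identify $r_2$ with the rectangle anchored at $c_P$); and a coupled shift of a whole column, \`a la Lemma~\ref{grid-lemma}, preserves area rather than increasing it, so it cannot by itself contradict optimality.

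The paper sidesteps the surgery altogether. It does not try to increase area in case~(b); instead it walks around the hole. Starting from $v$, it shows (using first an area argument when the side is short, then minimality of $|S|$ to pin down anchors when it overhangs) that each successive bounding rectangle $r_1,r_2,r_3,r_4$ must overhang the next corner of the hole, and that their anchors lie on the four sides of $Q$ in cyclic order. After four steps the fifth max-segment along the hole is trapped: it cannot reach $\partial Q$ without crossing one of $r_1,r_2,r_4$, so it is not introduced, contradicting Lemma~\ref{grid-lemma}. This ``trace around the hole until an edge cannot be introduced'' idea is the missing ingredient; your plan would need something of comparable strength to handle the overhang case.
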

\begin{proof}
Every internal vertex of $\pgraph{S}$ resides on the corner(s) of axis-aligned rectangle(s), and so has degree at least 2 and at most 4. Assume for contradiction that there is a vertex $b$ of $\pgraph{S}$ that has degree exactly 2, and
let $a$ and $c$ be its neighbours. After possible rotation, we may assume that 
$a$ lies to the left of $b$, and $c$ lies above $b$, as depicted in Figure~\ref{degree-fig}. Thus, $b$ is the bottom-right corner of some rectangle $r_1$, and no other rectangle has $b$ on its boundary. This implies that the region to the right of $bc$ and below $ab$ belongs to some hole $H$. So rectangle $r_1$ is anchored either on the left or the top side of $Q$; after a possible diagonal flip we assume that it is anchored on the left.
	
	\begin{figure}[tb]
		\centering
		\setlength{\tabcolsep}{0in}
		$\begin{tabular}{cc}
		\multicolumn{1}{m{.5\columnwidth}}{\centering\includegraphics[width=.47\columnwidth,page=1]{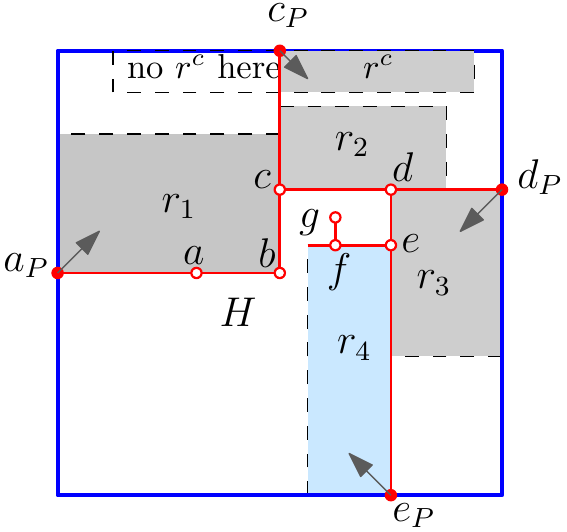}}
		&\multicolumn{1}{m{.5\columnwidth}}{\centering\includegraphics[width=.47\columnwidth,page=2]{deg3-1.pdf}}
		\\
		(a)&(b)
		\end{tabular}$
		\caption{Illustration of the proof of Lemma~\ref{degree-lemma}.}
		\label{degree-fig}
	\end{figure} 

Define $a_P$ and $c_P$ to be the points of $P$ that introduced $ab$ and $cb$, respectively;
we know that these must be on $P_\mathcal{L}$ respectively $P_\mathcal{T}$ since $b$ has degree 2.
By definition of ``introduced'' some rectangle $r^c$ is anchored at $c_P$.  We claim that $r^c$ cannot have $c_P$ as its top-right corner. Assume for contradiction that it has. Then we can expand $r^c$ (if needed) to cover the entire rectangle spanned by $a_P$ and $c_P$; this can only increase the coverage. In particular, the expanded $r^c$ covers all of $r_1$. We know that $r_1\neq r^c$ since $r_1$ was anchored on the left side of $Q$.  This contradicts that $S$ has the minimum number of rectangles, so $r^c$ has $c_P$ as its top-left corner.

If the right side $\rs{r_1}$ of $r_1$ is a sub-segment of $bc$, then we can stretch $r_1$ to the right to increase the coverage of $S$, contradicting optimality. So $\rs{r_1}$ must be a strict super-segment of $bc$, which in particular implies that $c$ is interior and has no leftward edge.  Since $c$ is a vertex, it must have a rightward edge; let $d$ be the vertex of $H$ to the right of $c$. Let $r_2$ be the rectangle whose bottom-left corner is $c$; this exists since edge $cd$ is the boundary of some rectangle(s), but the area below $cd$ belongs to hole $H$. Rectangle $r_2$ cannot be anchored on the right, otherwise we could expand $r^c$ to cover all of $r_2$ and reduce the number of rectangles, a contradiction. So $r_2$ is anchored on the top, which implies that $r_2=r^c$, else they would overlap.

If the bottom side $\bs{r_2}$ of $r_2$ is a sub-segment of $cd$, then we can stretch $r_2$ down to increase the coverage of $S$. So $\bs{r_2}$ is a strict super-segment of $cd$, which implies that $d$ is interior. We iterate this process three times as follows. (i) Let $e$ be the vertex of $H$ that is below $d$, and let $r_3$ be the rectangle whose top-left corner is $d$. Argue as before that $r_3$ is anchored at the right endpoint $d_P$ of the max-segment through $cd$, therefore the left side $\ls{r_3}$ is a strict super-segment of $de$ and $e$ is interior. (ii) Let $f$ be the vertex of $H$ that is to the left of $e$, and let $r_4$ be the rectangle whose top-right corner is $e$. Argue as before that $r_4$ is anchored at the bottom endpoint $e_P$ of the max-segment through $de$, therefore the top side $\ts{r_4}$ is a strict super-segment of $e\!f$ and $f$ is interior. (iii)~Finally, let $g$ be the vertex of $H$ that is above $f$ (possibly $g=a$). Now observe that the max-segment through $f\!g$ cannot reach the boundary of $Q$ without intersecting $r_4,r_1$ or $r_2$. Therefore, $f\!g$ is not introduced---a contradiction.
\end{proof}

We assumed that neither case (ii) nor (iii) of Observation~\ref{obs:entireQ} holds, which means that any grid-line of the grid $\grid{P}{Q}$ has exactly one end in $P$. So, we can direct the edges of the grid (and with it the edges of $\pgraph{S}$) from the end in $P$ to the end not in $P$. See also Figure~\ref{fig:can_realize}. Define a {\em guillotine cut} to be a max-segment of $\pgraph{S}$ for which both endpoints are on the boundary of $Q$.
\begin{lemma}
\label{max-segment-lemma}
\label{lem:guillotine}
\label{lem:one_hole_interior}
If there is no guillotine cut, then $S$ has a hole $H$. Furthermore, $H$ is a rectangle, $H$ is not incident to the boundary of $Q$, and the boundary of $H$ is a directed cycle of $\pgraph{S}$.
\end{lemma}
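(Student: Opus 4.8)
The plan is to work with the edge orientation fixed just before the statement: each max-segment $s$ of $\pgraph{S}$ runs from its unique endpoint in $P$ — which lies on $\partial Q$ and exists by Lemma~\ref{grid-lemma} — towards its other endpoint. I would first prove that, when there is no guillotine cut, \emph{every} vertex of $\pgraph{S}$ has out-degree at least $1$. If $v$ is a vertex of $\pgraph{S}$ on $\partial Q$, it has exactly one incident edge, directed into the interior of $Q$ along a horizontal or vertical grid ray; were $v\notin P$, the max-segment carrying that edge would have both ends on $\partial Q$, i.e.\ a guillotine cut. So $v\in P$, $v$ is the $P$-endpoint of that max-segment, the edge leaves $v$, and in particular $v$ has in-degree $0$. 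If $v$ is internal then $\deg(v)\in\{3,4\}$ by Lemma~\ref{degree-lemma}; whenever $v$ has both edges in a coordinate direction, those two collinear edges belong to one max-segment in whose relative interior $v$ lies, so they contribute one in- and one out-edge. Hence a degree-$4$ vertex has out-degree $2$, and a degree-$3$ vertex has out-degree exactly $1$ (its lone perpendicular edge points into $v$, since $v$, being interior to $Q$, must be the non-$P$ endpoint of its max-segment).

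Because every vertex has a forward neighbour, walking forward from an arbitrary vertex in the finite graph $\pgraph{S}$ must repeat a vertex, and the portion between the two occurrences of the first repeated vertex is a \emph{simple} directed cycle $C$. No vertex on $\partial Q$ lies on a directed cycle (in-degree $0$) and no edge of $\pgraph{S}$ lies on $\partial Q$, so $C$ is disjoint from $\partial Q$; as $Q$ is convex and contains $C$, the bounded component $H:=\mathrm{int}(C)$ of the complement of $C$ is a simply connected region inside the interior of $Q$, with $\partial H=C$. Now $H$ is uncovered: for any $r\in S$, $\mathrm{int}(r)$ contains no edge of $\pgraph{S}$ (a rectangle side inside $\mathrm{int}(r)$ would make two rectangle interiors overlap), so the connected set $\mathrm{int}(r)$ is disjoint from $C$ and lies in a single component of its complement; if $r$ intersected $H$ we would get $\mathrm{int}(r)\subseteq H$, hence $r\subseteq\overline{H}=H\cup C$, whence all four corners of $r$ lie in the interior of $Q$ and none is in $P\subseteq\partial Q$ — contradicting that $r$ is anchored. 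Since $\partial H=C$ consists of rectangle sides, $H$ is moreover a \emph{maximal} connected uncovered region, i.e.\ a hole; together with the previous observations this establishes all parts of the statement except that $H$ is a rectangle.

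To see that $H$ is a rectangle I would show that the orthogonal polygon $C=\partial H$ has no reflex vertex (relative to $H$). At a reflex vertex $v$, three of the four quadrants around $v$ belong to $H$, and the remaining quadrant is covered by a rectangle $r\in S$ having $v$ as a corner (a rectangle with $v$ in the relative interior of a side would, continuing past $v$, cover part of the hole $H$). Since $v$ is interior to $Q$, $r$ is anchored at a corner $q\neq v$, and $q$ can lie on at most one of the two sides of $r$ incident to $v$; pushing the other such side outward by a small amount enlarges $r$ into the uncovered region $H$, keeping $q$ a corner and keeping $r$ interior-disjoint from the remaining rectangles. This yields an anchored packing of strictly larger area, contradicting optimality of $S$. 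Hence $C$ has only convex corners, so $H$ is a rectangle.

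I expect the most delicate part to be the orientation/degree bookkeeping at internal vertices — arguing cleanly, via Lemma~\ref{grid-lemma} and the exclusion of Observation~\ref{obs:entireQ}(ii)--(iii), that every max-segment has a well-defined $P$-endpoint on $\partial Q$ and is oriented consistently, so that all the ``in vs.\ out'' claims hold — together with making the enlargement step at a reflex vertex fully rigorous: one must verify that at least one of the two sides of $r$ meeting $v$ avoids the anchor $q$ and that there is genuinely uncovered space immediately beyond that side.
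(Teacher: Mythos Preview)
Your construction of the directed cycle $C$ and the verification that $H:=\mathrm{int}(C)$ is an interior hole follow the same line as the paper: both show that $\pgraph{S}$ has no sink (your phrasing: every vertex has out-degree at least~$1$), via Lemma~\ref{degree-lemma} for interior vertices and the no-guillotine hypothesis for boundary ones, and then extract a directed cycle whose interior must be uncovered. Your write-up is in fact more explicit than the paper's on several points (boundary vertices having in-degree~$0$, maximality of~$H$).

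The genuine gap is in your last step. To rule out a reflex corner $v$ of $H$ you want to slide one of the two sides of $r$ incident to $v$ outward into $H$. But that side need not lie entirely on $\partial H$: it may extend past the point where $\partial H$ turns away from the supporting line, and beyond that point another rectangle of $S$ may sit flush against it on the outside. In that case pushing the side by any $\varepsilon>0$ creates an overlap. The same obstruction can arise for \emph{both} sides of $r$ at $v$ simultaneously --- nothing so far forbids $r$ from being much wider and taller than the covered ``notch'' at $v$ --- so you cannot guarantee a safe direction. Your closing caveat correctly flags this verification as delicate, and indeed it does not go through as stated.

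The paper's route here is shorter and sidesteps the difficulty: a reflex corner $v$ of $H$ can have only its two $C$-edges in $\pgraph{S}$, because any further edge at $v$ would point into one of the three quadrants belonging to $H$ and hence be a rectangle side with hole on both flanks, which is impossible. Thus $\deg(v)=2$, contradicting Lemma~\ref{degree-lemma}. Since you already invoke that lemma earlier, replacing your enlargement argument with this observation repairs the proof with almost no extra work.
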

\begin{proof}
We claim that no vertex $w$ of $\pgraph{S}$ on the boundary of $Q$ is a sink. For if the unique edge incident to $w$ were directed from some vertex $v$ to $w$, then by Lemma~\ref{grid-lemma} and the way we directed the edges of $G(S)$, the point $p$ that introduced $vw$ would be on the opposite side and hence the max-segment $pw$ would be a guillotine cut. Likewise no interior vertex $w$ can be a sink,  because $\deg(w)\geq 3$ by the previous lemma, which implies that two incident edge of $w$ have the same orientation (horizontal or vertical). One of them then becomes outgoing at $w$ since we direct edges along grid-lines. So $\pgraph{S}$ has no sink, which implies that it has a directed cycle $C$.
The region enclosed by $C$ has no point on
the boundary, so no rectangle anchored on the boundary can cover parts of it without intersecting $C$.  
So the interior region of $C$ is a hole $H$ not incident to the boundary.
We know that $H$ is a rectangle since it has no vertex of degree 2
by the previous lemma, hence in particular no reflex vertex.
\end{proof}

We use the previous lemma in the proof of our following stronger claim.
\begin{lemma}
\label{one-hole-lemma}
\label{lem:one_hole_all}
If $S$ has holes, then it has a hole $H$ that is a rectangle. Furthermore, every interior corner of $H$ has an incoming edge that lies on $H$.
\end{lemma}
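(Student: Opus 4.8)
The plan is to prove the two assertions in turn. For the first, I will argue that in fact \emph{every} hole of $S$ is a rectangle, directly from Lemma~\ref{degree-lemma}: a hole $H$ is simply connected (a covered ``island'' strictly inside $H$ would consist of rectangles anchored on $\partial Q$ that cannot reach $\partial Q$ without meeting $H$), so $\partial H$ is a single closed rectilinear curve, and a reflex corner of $H$ would be a point where a single rectangle fills the $90^\circ$ wedge on the covered side, hence a vertex of $\pgraph{S}$ of degree exactly $2$ when it is interior. Lemma~\ref{degree-lemma} forbids such an interior vertex, and convexity of $Q$ forbids a reflex corner on $\partial Q$; so $H$ has only convex corners, i.e., $H$ is a rectangle.

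For the second assertion, direct the edges of $\pgraph{S}$ as in the paragraph preceding Lemma~\ref{max-segment-lemma} (from the endpoint in $P$ to the endpoint not in $P$; this is well defined since no two points of $P$ share a coordinate). Fix a hole $H$ and an interior corner $c$ of $H$; the two sides of $H$ meeting at $c$ lie on edges of $\pgraph{S}$, and by Lemma~\ref{degree-lemma} $c$ has degree $3$ or $4$. If $\deg(c)=3$, then $c$ is an \emph{endpoint} (not an interior point) of the max-segment carrying one of these two sides, and since $c\notin P$ that max-segment is directed toward $c$, giving an edge on $H$ incoming to $c$; so the statement is automatic as long as every interior corner of $H$ has degree $3$. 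The only problematic configuration is an interior degree-$4$ \emph{source corner}: a corner $c$ at which both sides of $H$ point away from $c$.

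To avoid source corners I will choose $H$ carefully rather than take it arbitrarily. If $\pgraph{S}$ has no guillotine cut, Lemma~\ref{max-segment-lemma} already delivers a hole $H$ whose boundary is a directed cycle of $\pgraph{S}$ and that avoids $\partial Q$; then every corner of $H$ is interior with exactly one incoming incident edge, which lies on $H$, and we are done. If $\pgraph{S}$ has a guillotine cut $s$, with endpoints $u\in P$ and $v\notin P$ on opposite sides, then $s$ splits $Q$ into sub-rectangles $Q_1,Q_2$; since every interior point of $s$ lies on a rectangle-side, no rectangle of $S$ crosses $s$, so $S=S_1\sqcup S_2$ with $S_j$ packing $Q_j$ with anchors in $P\cap\partial Q_j$. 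The anchor $u$ is used by a rectangle of $S$ lying, say, in $S_1$, so $S_2$ does not use $u$ and $S_1'\cup S_2$ is a valid packing of $Q$ for every packing $S_1'$ of $(P\cap\partial Q_1,Q_1)$; hence $S_1$ is optimal for that sub-instance, and it still has a hole if the original hole lies in $\overline{Q_1}$. One then recurses on that side --- after re-establishing, via Lemma~\ref{grid-lemma} and then minimizing the number of rectangles, that the sub-packing meets the standing hypotheses --- and transfers the resulting hole back to $Q$: the only corners that are interior to $Q$ but lay on $\partial Q_1$ lie on $s$, and for such a corner the side of the hole along $s$ is incoming to it (the far endpoint of that side cannot be a second point of $P$ on the grid line of $s$).

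The step I expect to be the main obstacle is this recursion. Restricting an optimal, all-max-segments-introduced, rectangle-minimum packing to one side of a guillotine cut need not preserve ``all max-segments introduced'', since a max-segment that crossed $s$ loses its $P$-endpoint; one must therefore re-optimize the sub-packing and then argue that the hole it produces is still a hole of some optimal, rectangle-minimum packing of the original instance. An alternative that sidesteps the recursion is a direct exchange argument: take $H$ of minimum area and show that at an interior degree-$4$ source corner $c$ the rectangle of $S$ immediately across $H$ from $c$ is, by the no-shared-coordinate assumption, not anchored at the corner of itself that borders $H$, so that it --- possibly together with a short chain of rectangles it pushes --- can be enlarged into $H$, increasing the covered area and contradicting optimality; making this chain argument precise is the delicate part.
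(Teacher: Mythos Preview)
Your overall architecture matches the paper's: handle the no-guillotine case via Lemma~\ref{max-segment-lemma}, and in the presence of a guillotine cut recurse on a sub-rectangle. Your direct argument for rectangularity (reflex corner $\Rightarrow$ interior degree-$2$ vertex, contradicting Lemma~\ref{degree-lemma}) is a clean shortcut; the paper obtains the same conclusion as a byproduct of the induction rather than up front.

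The gap is exactly the one you flag in the recursion, and the paper's fix is a single observation you are missing. Let $aa'$ be the guillotine cut with $a\in P$, $a'\notin P$, and let $r_1$ be the rectangle of $S$ anchored at $a$, lying (say) in $Q_1$. Then $r_1$ must equal $Q_1$: otherwise replacing all rectangles of $S$ inside $Q_1$ by the single rectangle $Q_1$ (still anchored at $a$) gives an optimal packing with all max-segments introduced and no more rectangles, contradicting the standing choice of $S$. Two consequences follow immediately. First, the hole is forced to lie in $Q_2$, so you never have to decide which side to recurse on. Second, $S'=S\setminus\{r_1\}$ is an anchored packing of $(P\setminus\{a\},Q_2)$ whose max-segments are exactly the max-segments of $S$ contained in $Q_2$; none of them crosses $aa'$ (there are no edges in the interior of $Q_1$), so each is still introduced by its original point of $P\setminus\{a\}$. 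Hence your ``re-optimize and lose the hole'' worry evaporates: $S'$ already satisfies the standing hypotheses, the induction applies to it verbatim, and the hole it returns is literally a hole of $S$.

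Your transfer step for corners $c$ lying on the cut also needs adjustment. You try to use the edge of $H$ that runs along $aa'$, but the direction along $aa'$ is globally $a\to a'$, so that edge is incoming at $c$ only when $c$ is on the $a'$-side of it; it can be outgoing. The paper instead looks at the \emph{vertical} edge of $H$ at $c$ (which goes down into $Q_2$). If it were directed downward, the max-segment containing it would be introduced from $P_{\mathcal T}$ and would have to pass through the interior of $Q_1=r_1$, which carries no edges of $\pgraph{S}$; hence it is directed upward and is incoming at $c$. With these two observations your outline becomes a complete proof identical in spirit to the paper's.
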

\begin{proof}

We prove by induction on the cardinality of $S$. If $S$ is empty then $H=Q$ and we are done. Assume that $S$ is not empty. If there is no guillotine cut, then $H$ is a rectangle by Lemma~\ref{lem:guillotine} which is interior and whose boundary is a directed cycle; this satisfies all claims. Assume now that there is a guillotine-cut $aa'$, say it is horizontal. Since case (ii) of Observation~\ref{obs:entireQ} does not hold, not both $a$ and $a'$ can belong to $P$, say $a'\not\in P$. Segment $aa'$ divides $Q$ into two rectangles $Q_1$ and $Q_2$ with $Q_1$ above $Q_2$; see Figure~\ref{fig:one_hole_all}(a). There is a rectangle $r_1$ that is anchored at $a$; up to a vertical flip we may assume that $r_1$ is inside $Q_1$. Observe that $r_1$ must cover all of $Q_1$, else we could find a solution with more coverage or fewer rectangles. Thus $S':=S\setminus \{r_1\}$ is an anchored-rectangle packing for $Q_2$ with anchor-points in $P\setminus \{a\}$. Notice that $S'$ must be optimal for $Q_2$, else we could get a better packing for $Q$ by adding $r_1$ to it.  It cannot cover all of $Q_2$ since $S$ had holes. Therefore, $S'$ has a hole $H$ and by induction hypothesis $H$ is a rectangle; observe that $H$ is also a rectangular hole in $S$. This finishes our proof of the first claim.

\begin{figure}[t]
\hspace*{\fill}
\begin{subfigure}{0.31\linewidth}
\includegraphics[width=0.99\linewidth,page=1]{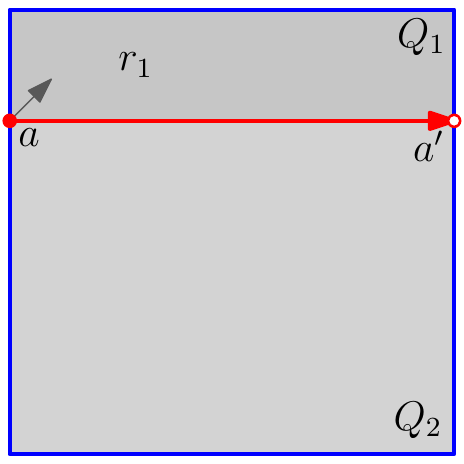}
\caption{}
\end{subfigure}
\hspace*{\fill}
\begin{subfigure}{0.31\linewidth}
\includegraphics[width=0.99\linewidth,page=2]{one_hole_all.pdf}
\caption{}
\end{subfigure}
\hspace*{\fill}
\begin{subfigure}{0.31\linewidth}
\includegraphics[width=0.99\linewidth,page=4]{one_hole_all.pdf}
\caption{}
\end{subfigure}
\caption{With a guillotine cut, a hole can be found in $Q_2$ recursively.}
\label{fig:one_hole_all}
\end{figure}

Now, consider a corner $c$ of $H$ that is interior to $Q$. If $c$ is interior to $Q_2$, then by induction hypothesis it has an incoming edge that lies on $H$. Assume that $c$ is not interior to $Q_2$ and thus it lies on $aa'$ and $c\neq a,a'$. Consider an interior vertical edge $e$ of $H$; see Figures~\ref{fig:one_hole_all}(b) and \ref{fig:one_hole_all}(c). Observe that $e$ is directed upwards because otherwise the max-segment $s_e$ containing it would have intersect $r_1$. It follows that all interior vertical edges of $H$ are directed upwards. Therefore the vertical edge of $H$ that is incident to $c$ is also upward and hence incoming to $c$ as desired.
\end{proof}

Hence, hole $H$ must satisfy this \emph{hole-condition} on the edge directions (at least for \emph{some} optimal solution $S$); that is, every interior corner of $H$ has an incoming edge that lies on $H$. It turns out that this condition is also sufficient.
\begin{lemma}
\label{lem:can_realize_interior}
\label{lem:can_realize_boundary}
\label{lem:can_realize}
Let $H$ be a rectangle whose sides lie on $\grid{P}{Q}$. If $H$ satisfies the hole-condition, then there exists a packing that covers $Q\setminus H$.
\end{lemma}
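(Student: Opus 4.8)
The plan is to exhibit, for any rectangle $H$ with sides on $\grid{P}{Q}$ satisfying the hole-condition, an explicit packing of at most four interior-disjoint anchored rectangles whose union is exactly $Q\setminus H$. Write $Q=[0,X]\times[0,Y]$ and $H=[x_1,x_2]\times[y_1,y_2]$ with $0\le x_1<x_2\le X$ and $0\le y_1<y_2\le Y$. Each side of $H$ either lies on $\partial Q$ or lies on a grid line, and each grid line is the inward extension of a ray from a unique point of $P$ (its \emph{generator}); the direction of that grid line records on which side of $Q$ the generator sits, and both endpoints of a (full) vertical or horizontal grid line are corners of the two rectangles into which it cuts $Q$. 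Restated in these terms, the hole-condition says: at every corner of $H$ that is interior to $Q$, at least one of the two incident sides of $H$ is a directed grid edge pointing \emph{into} that corner, i.e.\ its generator sits at the near end. (One might hope for a uniform ``pinwheel'' tiling of $Q\setminus H$, or for a one-guillotine-cut-at-a-time recursion, but the naive pinwheel lets two pieces overlap once $H$ touches $\partial Q$, and a careless recursion loses the generator it will later need; hence the casework below.)

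\textbf{Case structure.} First I would split on how many sides of $H$ lie on $\partial Q$ — equivalently, on how many corners of $H$ are interior to $Q$, which for a rectangle can only be $0$, $1$, $2$, or $4$. If $H=Q$ there is nothing to cover. If $H$ meets three sides of $Q$, then $Q\setminus H$ is a single strip bounded by one full grid line, and one rectangle anchored at its generator (a corner of the strip) covers it. If $H$ meets two opposite sides, $Q\setminus H$ is two strips, each bounded by a full grid line; two rectangles anchored at the two generators do it, and these generators are distinct since they lie at different coordinates. If $H$ meets two adjacent sides of $Q$ (so $H$ sits in a corner of $Q$ and has a single interior corner $c$), the hole-condition at $c$ guarantees that one of the two free sides of $H$ has its generator on the correct side of $Q$; a short split on which of the two sides supplies the incoming edge produces, in each sub-case, a pair of anchored rectangles — sometimes the obvious ``L-shaped'' pair, sometimes a ``staircase'' pair — that tile $Q\setminus H$ with two distinct anchors. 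The case in which $H$ meets exactly one side of $Q$ (two interior corners) is analogous but with three rectangles. Finally, when $H$ is interior to $Q$, all four sides are free and all four corners are interior; since four sides of a $4$-cycle can each point into a \emph{distinct} incident corner only in the two rotational ways, the hole-condition forces the four side-directions to form a directed cycle around $H$ (the same conclusion reached, by a different route, in Lemma~\ref{lem:one_hole_interior}). Hence the four generators lie one on each side of $Q$ in a pinwheel pattern, and the pinwheel of the matching handedness — e.g.\ $[0,x_1]\times[0,y_2]$, $[x_1,X]\times[0,y_1]$, $[x_2,X]\times[y_1,Y]$, $[0,x_2]\times[y_2,Y]$ for one orientation, and its mirror for the other — consists of four interior-disjoint rectangles, each anchored at a corner coinciding with the relevant generator, whose union is $Q\setminus H$.

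\textbf{Main obstacle.} The hard part is the bookkeeping inside the boundary cases rather than any single clever step: once some sides of $H$ lie on $\partial Q$, the naive pinwheel rectangle adjacent to that side wants to grow past its quadrant and collides with a neighbour, so one must substitute a staircase-type decomposition and then re-verify (i) that the pieces are interior-disjoint, (ii) that every non-degenerate piece has a corner in $P$, and (iii) that the chosen anchors are pairwise distinct. Each verification is routine coordinate-chasing, but there are several sub-cases, and the crucial point throughout is to invoke the hole-condition at the correct interior corner so that the generator one needs is guaranteed to exist and to sit on the side of $Q$ that makes it a corner of the intended rectangle. Figure~\ref{fig:can_realize} depicts the resulting packings case by case.
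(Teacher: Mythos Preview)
Your proposal is correct and follows essentially the same approach as the paper: a case split on how many sides of $H$ lie on $\partial Q$ (the paper's cases (1), (2a), (2b), (3), (4)), using the hole-condition at each interior corner to pin down where the generating points sit, and then exhibiting an explicit packing of at most four anchored rectangles (the paper simply points to Figure~\ref{fig:can_realize} for the packings you write out in coordinates). One small expository slip: the number of sides on $\partial Q$ is not literally ``equivalent'' to the number of interior corners, since $2$ opposite sides, $3$ sides, and $4$ sides all give $0$ interior corners---but you enumerate by sides anyway, so this does not affect the argument.
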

\begin{proof}
Let $p_1,\dots,p_i$ (for some $i\leq 4$) be the points of $P$ that defined
the grid lines on which the sides of $H$ reside.  We distinguish cases (1)--(4)
depending on how many sides of $H$ are interior, where case (2) splits further
into cases (2a) and (2b) depending on whether the sides are adjacent or parallel.
After possible rotation, the hole is situated as shown in Figure~\ref{fig:can_realize}.
Every interior corner of $H$ has an incoming edge that is on $H$, which (up to
reflection) forces the location of some of $p_1,\dots,p_i$ as indicated in
the figure. In all cases, one verifies that $i$ rectangles anchored at $p_1,\dots,p_i$ suffice to cover $Q\setminus H$.
\end{proof}

\begin{figure}[t]
\includegraphics[width=0.30\linewidth,page=5]{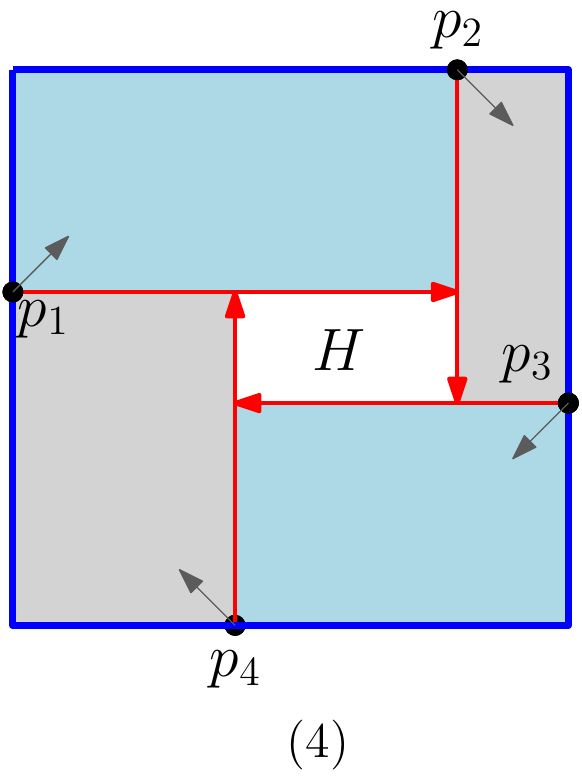}%
\hspace*{\fill}
\includegraphics[width=0.30\linewidth,page=3]{can_realize.pdf}%
\hspace*{\fill}
\includegraphics[width=0.30\linewidth,page=4]{can_realize.pdf}%
\hspace*{\fill}
\newline
\hspace*{\fill}
\includegraphics[width=0.30\linewidth,page=2]{can_realize.pdf}
\hspace*{\fill}
\includegraphics[width=0.30\linewidth,page=1]{can_realize.pdf}
\hspace*{\fill}
\caption{Any rectangle whose boundary is directed suitably can be realized as hole.}
\label{fig:can_realize}
\end{figure}

We are now ready to prove Theorem~\ref{ins:standardFormOtimal}. To this end, we first show the following:
\begin{lemma}
\label{lem:one_hole_cell}
If $S$ has holes, then it has exactly one hole $H$, and $H$ is a cell of~$\grid{P}{Q}$.
\end{lemma}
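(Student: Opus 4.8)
The plan is to prove the two assertions of the lemma in turn, each time combining Lemma~\ref{one-hole-lemma} with Lemma~\ref{lem:can_realize}: the former supplies a \emph{rectangular} hole $H$ whose sides lie on $\grid{P}{Q}$ and that satisfies the hole-condition, and the latter converts any such grid-aligned rectangle into a packing of its complement, of area exactly $\mathrm{area}(Q)-\mathrm{area}(H)$. First I would show that $H$ is the \emph{unique} hole of $S$, and then that $H$ is a \emph{cell}.

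For uniqueness I would argue by area-counting. Take the rectangular hole $H$ furnished by Lemma~\ref{one-hole-lemma}; its boundary lies on $\grid{P}{Q}$, being made up of rectangle-sides of $S$ together with parts of $\partial Q$, and it satisfies the hole-condition, so Lemma~\ref{lem:can_realize} gives a packing $S'$ with $\mathrm{area}(S')=\mathrm{area}(Q)-\mathrm{area}(H)$. Since the holes of $S$ are pairwise disjoint and $H$ is one of them, $\mathrm{area}(S)=\mathrm{area}(Q)-(\text{total area of the holes of }S)\le\mathrm{area}(Q)-\mathrm{area}(H)=\mathrm{area}(S')$. Optimality of $S$ forces equality, so $H$ is the only hole.

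For cell-ness, suppose for contradiction that $H$ is not a cell, so some segment of $\grid{P}{Q}$ crosses the interior of $H$. As grid segments are axis-parallel, this segment lies on an axis-parallel line $\ell$ meeting the interior of $H$; after a rotation we may take $\ell$ to be the vertical line $x=x_\ell$, and, writing $H=[x_0,x_1]\times[y_0,y_1]$, we have $x_0<x_\ell<x_1$. Moreover $\ell$ is part of the inward grid-ray from some $p\in P$ on the top or bottom side of $Q$, and that ray runs across the full height of $Q$, so $\ell\cap H=\{x_\ell\}\times[y_0,y_1]$, which splits $H$ into two rectangles $H_1=[x_0,x_\ell]\times[y_0,y_1]$ and $H_2=[x_\ell,x_1]\times[y_0,y_1]$, each with sides on $\grid{P}{Q}$; up to a reflection we may assume the ray, hence $\ell$, is directed upward. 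I then claim that one of $H_1,H_2$ satisfies the hole-condition. Orient each side of $H_1$ or $H_2$ that lies on a grid-ray by that ray's direction. The corners $(x_0,y_0),(x_0,y_1)$ of $H_1$ (and the corners $(x_1,y_0),(x_1,y_1)$ of $H_2$) are corners of $H$, at which the incident sides of $H_j$ are sub-segments of the corresponding sides of $H$ carrying the same orientations, so the hole-condition is inherited there from $H$; and the corner $(x_\ell,y_1)$, if interior to $Q$, is the head of the upward-oriented right side of $H_1$ and of the upward-oriented left side of $H_2$. Thus the only corner that can still fail is $(x_\ell,y_0)$, which is satisfactory for $H_1$ (resp.\ $H_2$) exactly when the bottom side of $H$ is oriented rightward (resp.\ leftward) near $x_\ell$. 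If $y_0=0$ then $(x_\ell,y_0)\in\partial Q$ is not an interior corner and nothing is required there, so $H_1$ works; otherwise the line $y=y_0$ lies strictly between the top and bottom sides of $Q$, so the bottom side of $H$ lies on a grid-ray on the line $y=y_0$, and since case~(ii) of Observation~\ref{obs:entireQ} fails that ray is unique and hence has a well-defined orientation --- rightward if it emanates from a point of $P_\mathcal{L}$, leftward if from a point of $P_\mathcal{R}$ --- so one of $H_1,H_2$ satisfies the hole-condition. In every case Lemma~\ref{lem:can_realize} produces a packing of $Q\setminus H_j$ of area $\mathrm{area}(Q)-\mathrm{area}(H_j)>\mathrm{area}(Q)-\mathrm{area}(H)=\mathrm{area}(S)$, contradicting the optimality of $S$. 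Hence $H$ is a cell.

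The hard part is the cell-ness step, and within it the single corner $c=(x_\ell,y_0)$ where $\ell$ meets the bottom side of $H$. What needs care is checking that the side-orientations already pinned down --- by the hole-condition for $H$ and by the upward orientation of $\ell$ --- settle every other corner of $H_1$ and $H_2$, so that whether $H_1$ or $H_2$ is feasible depends solely on the orientation of the grid-ray carrying the bottom side of $H$; and then observing that the failure of Observation~\ref{obs:entireQ}(ii) makes that orientation unambiguous, always favouring one of $H_1,H_2$. One must also keep track of which corners are interior to $Q$, since a corner of $H$ or of $H_j$ lying on $\partial Q$ imposes no hole-condition constraint at all.
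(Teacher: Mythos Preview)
Your argument is correct, and the uniqueness part coincides with the paper's. For the cell-ness part you take a cleaner route than the paper does. The paper argues by exhaustive case analysis: it runs through the five possible configurations of $H$ from Lemma~\ref{lem:can_realize} (cases (1), (2a), (2b), (3), (4)), and for each one considers every possible orientation and direction of the bisecting grid line $\ell$; this yields roughly sixteen sub-cases (Figure~\ref{fig:bisect}), in each of which one checks by picture that one of $H_1,H_2$ still satisfies the hole-condition. Your argument bypasses this enumeration by isolating the single new corner $(x_\ell,y_0)$ at the tail of $\ell$ as the only place the hole-condition can fail, and then observing that the (unique, since (ii) of Observation~\ref{obs:entireQ} is excluded) orientation of the grid-ray carrying the bottom side of $H$ selects which of $H_1,H_2$ succeeds there. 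This is more conceptual and avoids the case explosion; the paper's approach, on the other hand, has the minor advantage of being entirely picture-checkable and of exhibiting explicitly, in each configuration, which sub-rectangle becomes the smaller hole.
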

\begin{proof}
Lemma~\ref{lem:one_hole_all} shows that we may assume $H$ to be a rectangle that satisfies the hole-condition. By Lemma~\ref{lem:can_realize_interior},  we can cover $Q\setminus H$ with anchored rectangles, which by maximality of $S$ means that $H$ is unique.

If $H$ is not a cell, then it is bisected by some grid-line $\ell$ into two pieces $H_1$ and $H_2$. If some $H'\in \{H_1,H_2\}$ satisfies the hole-condition (i.e., all interior corners have incoming edges on $H'$), then we can create a packing that covers $Q\setminus H' \supset Q\setminus H$, which contradicts the minimality of $S$. In fact, by inspecting the possible configurations of $H$ in cases (1), (2a), (2b), (3), and (4), as well as possible placements of the ``undecided'' anchor-points and the orientation/direction of $\ell$ (see Figure~\ref{fig:bisect}, which shows all but one case), we observe that $H'$ satisfies the hole-condition as we can cover $Q\setminus H'$ in each of these cases. So, there is a contradiction in all cases, and $H$ must be one cell.
\end{proof}

\begin{figure}[t]
\includegraphics[width=0.2\linewidth,page=1]{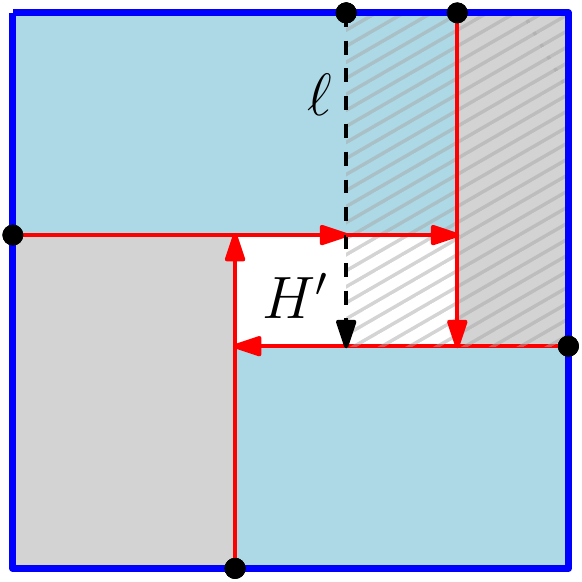}%
\hspace*{\fill}
\includegraphics[width=0.2\linewidth,page=2]{bisect.pdf}%
\hspace*{\fill}
\includegraphics[width=0.2\linewidth,page=3]{bisect.pdf}%
\hspace*{\fill}
\includegraphics[width=0.2\linewidth,page=4]{bisect.pdf}%
\newline
\includegraphics[width=0.2\linewidth,page=5]{bisect.pdf}%
\hspace*{\fill}
\includegraphics[width=0.2\linewidth,page=6]{bisect.pdf}%
\hspace*{\fill}
\includegraphics[width=0.2\linewidth,page=7]{bisect.pdf}%
\hspace*{\fill}
\includegraphics[width=0.2\linewidth,page=8]{bisect.pdf}%
\newline
\includegraphics[width=0.2\linewidth,page=9]{bisect.pdf}%
\hspace*{\fill}
\includegraphics[width=0.2\linewidth,page=10]{bisect.pdf}%
\hspace*{\fill}
\includegraphics[width=0.2\linewidth,page=11]{bisect.pdf}%
\hspace*{\fill}
\includegraphics[width=0.2\linewidth,page=12]{bisect.pdf}%
\newline
\includegraphics[width=0.2\linewidth,page=13]{bisect.pdf}%
\hspace*{\fill}
\includegraphics[width=0.2\linewidth,page=14]{bisect.pdf}%
\hspace*{\fill}
\includegraphics[width=0.2\linewidth,page=15]{bisect.pdf}%
\hspace*{\fill}
\includegraphics[width=0.2\linewidth,page=16]{bisect.pdf}%
\newline
\caption{If a hole that satisfies the hole-condition is bisected by a line $\ell$, then this gives rise to a smaller hole $H'$ that satisfies the hole-condition.}
\label{fig:bisect}
\end{figure}

By Lemma~\ref{lem:one_hole_cell}, we have characterized solutions that have holes.  It remains to characterize solutions that do not have holes; i.e., to show that conditions (i)--(iii) of Observation~\ref{obs:entireQ} are necessary.
\begin{lemma}
\label{entire-Q-lemma}
\label{lem:entireQ}
  	If $Q$ can be covered with anchored rectangles, then one of (i), (ii) or (iii) holds.
\end{lemma}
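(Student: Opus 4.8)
The plan is to prove the contrapositive: if none of (i)--(iii) holds, then no anchored-rectangle packing covers all of $Q$. (Nothing here uses that $Q$ is a square, only that it is an axis-aligned rectangle; this is what lets the recursion below go through.) I would argue by induction on $|P|$, running the same guillotine-cut decomposition that appears in the proof of Lemma~\ref{lem:one_hole_all}, but now in the regime where $Q$ is fully covered, so that the recursion bottoms out at a contradiction. The base case $|P|\le 1$ is immediate: since (i) fails, the only available anchor (if any) lies in the relative interior of a side of $Q$, and one rectangle anchored there cannot cover $Q$.

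For the inductive step, suppose for contradiction that some packing covers $Q$. By Lemma~\ref{grid-lemma} there is then an optimal solution $S$ -- which necessarily covers $Q$ -- with all max-segments introduced, and among all such solutions I take one minimizing the number of rectangles, so that Lemma~\ref{lem:guillotine} applies to $S$. If $\pgraph{S}$ has no guillotine cut, Lemma~\ref{lem:guillotine} produces a hole in $S$, contradicting that $S$ covers $Q$. So $\pgraph{S}$ has a guillotine cut $aa'$. After a rotation assume it is horizontal, so its endpoints lie on the left and right sides of $Q$ and have equal $y$-coordinate. Since (ii) fails, $a$ and $a'$ are not both in $P$; since $aa'$ is an introduced max-segment, exactly one of them -- after a horizontal reflection, say $a$ -- lies in $P$ and anchors a rectangle $r_1$ of $S$. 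As $a$ is on the left side of $Q$, the rectangle $r_1$ extends rightward and either up or down from $aa'$; after a vertical reflection assume $r_1$ lies in the closed half $Q_1$ of $Q$ above $aa'$, and let $Q_2$ be the closed half below.

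Since $aa'$ is a guillotine cut, no rectangle of $S$ straddles it, so $S_2 := \Set{r\in S}{r\subseteq Q_2}$ already covers all of $Q_2$. Every anchor of $S_2$ lies on $\partial Q_2$; none is a corner of $Q_2$ (the two bottom corners of $Q$ carry no point of $P$ because (i) fails, the point $a$ anchors only $r_1\notin S_2$, and $a'\notin P$); and none equals $a$. Hence $S_2$ is a covering packing for the instance $(Q_2, P')$, where $P' := (P\cap\partial Q_2)\setminus\{a\}$. This instance satisfies none of (i)--(iii): for (i), the corners of $Q_2$ carry no point of $P'$; for (ii), $P'_\mathcal{L}\cup P'_\mathcal{R}\subseteq P_\mathcal{L}\cup P_\mathcal{R}$, which (by failure of (ii) for $Q$) has no two points of equal $y$-coordinate; and for (iii), the top side of $Q_2$ is $aa'$, whose relative interior meets no point of $P$ and whose endpoints satisfy $a\notin P'$ and $a'\notin P$, so $P'_\mathcal{T}=\emptyset$ and $P'_\mathcal{T}\cup P'_\mathcal{B}=P_\mathcal{B}\subseteq P_\mathcal{T}\cup P_\mathcal{B}$ has no two points of equal $x$-coordinate. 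Since $a\in P\setminus P'$ we have $|P'|<|P|$, so the induction hypothesis forbids covering $Q_2$ -- contradicting that $S_2$ covers $Q_2$. This contradiction completes the induction, and hence proves the lemma.

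I expect the delicate point to be the bookkeeping at the guillotine cut: verifying that the reduced instance $(Q_2,P')$ still fails every one of (i)--(iii) -- especially that the new side $aa'$ of $Q_2$ contributes no point to $P'_\mathcal{T}$, which is precisely where $a'\notin P$ is used -- and that $S_2$ is a legal packing for $(Q_2,P')$, i.e.\ that no rectangle of $S$ contained in $Q_2$ is anchored at $a$ (this is why I first normalize, via the vertical reflection, so that the unique rectangle of $S$ anchored at $a$ lies in $Q_1$). The other ingredients -- the ``no guillotine cut'' case via Lemma~\ref{lem:guillotine}, and the strict drop $|P'|<|P|$, which relies on an introduced guillotine cut having exactly one endpoint in $P$ -- are routine.
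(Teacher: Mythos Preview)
Your proof is correct and follows essentially the same route as the paper's: both arguments use Lemma~\ref{lem:guillotine} to produce a guillotine cut $aa'$ in any covering packing, observe that only one endpoint can lie in $P$, and then recurse on the half $Q_2$ with the reduced point set. The only difference is cosmetic framing---you run the contrapositive by induction on $|P|$ and re-verify that (i)--(iii) fail for $(Q_2,P')$, whereas the paper argues directly that (i) or (ii) for the sub-instance lifts to (i) or (ii) for $Q$ and that (iii) cannot arise since the new top side $aa'$ carries no point of $P'$; the underlying bookkeeping is identical.
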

\begin{proof}
Let $S$ be a packing that covers all of $Q$. If $\pgraph{S}$ has no edge, then all of $Q$ must be covered by one rectangle, which hence must be anchored at a corner of $Q$ and (i) holds. So assume that $\pgraph{S}$ has edges. By Lemma~\ref{max-segment-lemma}, since $S$ has no hole there must be a guillotine-cut $aa'$, say it is horizontal. If both $a$ and $a'$ are in $P$ then (ii) holds and we are done, so assume $a\in P$ and $a'\notin P$.

Recall the rectangles $Q_1, Q_2$ and $r_1$ from the proof of Lemma~\ref{lem:one_hole_all} and observe that $S':=S\setminus \{r_1\}$  covers all of $Q_2$ using anchor-points in $P':=P\setminus \{a\}$. Apply induction to $S',P',Q_2$. If (i) holds for them, then $P'$ has a point on a corner of $Q_2$, which by $a,a'\notin P'$ is also a corner of $Q$ and we are done. If (ii) holds for them, then two points in $P'\subset P$ have the same $y$-coordinate and we are done. Finally (iii) cannot hold for $S',P',Q_2$ because the top side of $Q_2$ has no point of $P'$ on it since $a'\not\in P$.
\end{proof}

We are finally ready to prove Theorem~\ref{alg-thr}. Let $S$ be the optimal solution with the minimum number of rectangles. If $S$ covers all of $Q$, then by Lemma~\ref{lem:entireQ} one of (i)--(iii) holds. If (i) holds, then the corner in $P$ will be chosen under rule (1). (In these and all other cases, ``chosen'' means ``after a suitable rotation and/or reflection''.) If (ii) or (iii) holds then the two points with the coinciding coordinate will be chosen under rule (2b).

If $S$ has holes, then by Lemma~\ref{lem:one_hole_all}
its unique hole $H$ is a cell such that
all interior corners of $H$ have incoming edges on $H$.
Let $p_1,\dots,p_i$ be the points that introduce interior sides of $H$.
We know that $H$ has one of the types shown in Figure~\ref{fig:can_realize},
and $p_1,\dots,p_i$ hence will be considered
under the corresponding rule.  Moreover, all point sets that fit
the type can be realized by Lemma~\ref{lem:can_realize}. So $H$
must be the one that minimizes the area, which corresponds to the
points minimizing the $x$-distance resp.~$y$-distance.  So 
one of rules 1, 2a, 2b, 3 or 4 applies to the points $p_1,\dots,p_i$ 
and Theorem~\ref{alg-thr} holds.

\section{Boundary-Anchored Squares}
\label{sec:squares}
Recall that $Q$ is an axis-aligned square in the plane and $P$ is a set of points on the boundary of $Q$. In the \emph{boundary anchored square packing} (BASP) problem we want to find a set of disjoint axis-aligned squares in $Q$ that are anchored at points of $P$ and maximize the total area. For this problem we are unable to find a grid\textemdash as we did for rectangle packing\textemdash that discretizes the problem such that the sides of every square in an optimal solution lie on that grid. It might be tempting to obtain a grid as follows. For every point $p$ on the bottom side of $Q$ we add the following lines to the grid (see Figure~\ref{square-grid}(a)): 
\begin{enumerate}[(\arabic*)]
	\item one vertical line through $p$,
	\item one horizontal line through the top side of the largest square in $Q$ that has $p$ on its bottom-left corner, and one for a similar square that has $p$ on its bottom-right corner, and
	\item for every other point $q$ on the bottom side of $Q$, we add one horizontal line through the top side of the square that has the segment $pq$ as its bottom side.
\end{enumerate}

We add similar lines for points that are on the left side, the right side, and the top side of $Q$. Let $\grid{P}{Q}$ be the resulting grid. We construct a set of points for which no optimal solution of the BASP is introduced by $\grid{P}{Q}$. Figure~\ref{square-grid}(b) shows a set of six points with an optimal solution associated for them. A point $p$ lies on the bottom side of $Q$ and at distance $\delta$ from the bottom-left corner of $Q$, for a small $\delta>0$. Five points $u,v,w,x,y$ arranged on the top side of $Q$ from left to right such that $w$ is the mid-point of the top side of $Q$, $|vw|=|wx|=1.5\delta$, and $|uv|=|xy|=\epsilon$, for a small $\epsilon$ that is much less than $\delta$. Any optimal solution for this setting contains the largest square in $Q$ that has $p$ on its bottom-left corner. Also any optimal solution contains the two squares that are anchored at $u$ and $y$ as depicted in Figure~\ref{square-grid}(b). The solution shown in Figure~\ref{square-grid}(b) is optimal. Any optimal solution contains two squares of side-length $\delta$ and one square of side-length $\delta/2$ that are anchored at $v,w,x$. The square of side-length $\delta/2$ is not defined by $\grid{P}{Q}$, no matter on which of $v,w,x$ it is anchored.

\begin{figure}[t]
	\centering
	\setlength{\tabcolsep}{0in}
	$\begin{tabular}{cc}
	\multicolumn{1}{m{.45\columnwidth}}{\centering\includegraphics[width=.3\columnwidth]{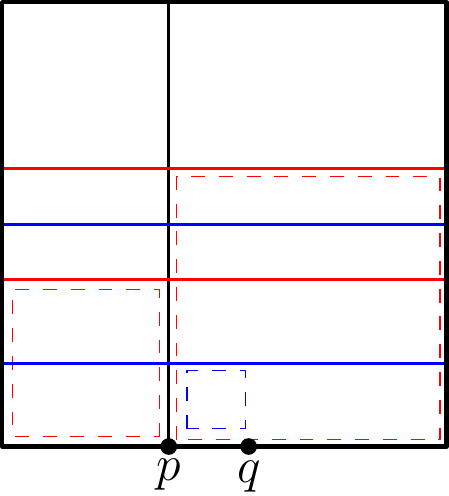}}
	&\multicolumn{1}{m{.55\columnwidth}}{\centering\includegraphics[width=.33\columnwidth]{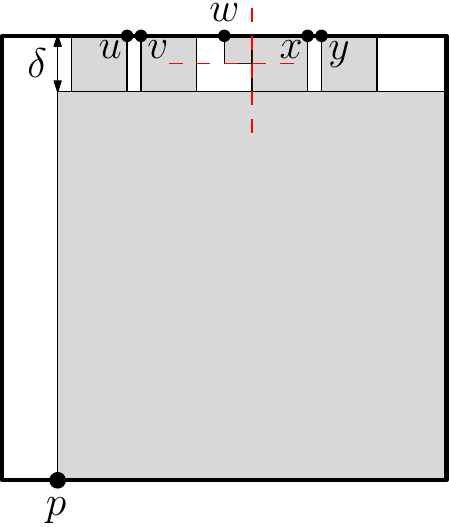}}
	\\
	(a)&(b)
	\end{tabular}$
	\caption{(a) The grid lines for every point $p$. (b) An optimal solution in which the square anchored at $w$ is not introduced by $\grid{P}{Q}$.}
	\label{square-grid}
\end{figure}

In the rest of this section we consider two special cases where the points of $P$ lie only on one side of $Q$, or on two opposite sides of $Q$. Later we will see that the two opposite-side case can be reduced to some instances of the one-side case. 

\subsection{Points on one side}
In this section we consider a version of the BASP problem where the points of $P$ lie only on one side of $Q$. We consider a more general version where $Q$ is rectangle and the points of $P$ lie on a longer side of $Q$. To avoid confusion in our notation, we use $R$ to represent such a $Q$. Let $w$ and $h$ denote the width and height of $R$, respectively. We assume that the longer side of $R$ is parallel to the $x$-axis and points of $P$ lie on the bottom side of $R$; see Figure~\ref{rectangle-fig}. We introduce a grid $\ugrid{}{}$ such that any optimal solution for this problem is defined by $\ugrid{}{}$. This grid contains the following lines:
\begin{enumerate}[(\arabic*)]
	\item a vertical line through $p$,
	\item a horizontal line through the top side of the largest square in $R$ that has $p$ on its bottom left-corner, and one for a similar square that has $p$ on its bottom-right corner, 
	\item for every other point $q$, that is at distance at most $h$ from $p$, we add one horizontal line through the top side of the square that has the segment $pq$ as its bottom side, and
	\item one vertical line through the right side of the largest square in $R$ that has $p$ on its bottom-left corner, and one for a similar square that has $p$ on its bottom-right corner.
\end{enumerate}

Based on the construction of $\ugrid{}{}$, we define a set $\mathcal{S}$ of squares that are obtained as follows. For every point $p\in P$ we add to $\mathcal{S}$ three types of squares (see Figure~\ref{rectangle-fig}(a)):
\begin{enumerate} [(a)]
	\item[T1] the largest square in $R$ that has $p$ on its bottom-left corner, and the largest square in $R$ that has $p$ on its bottom-right corner,
	\item[T2] for every other point $q$, that is within distance $h$ from $p$, we add a square that has the segment $pq$ as its bottom side, and
	\item[T3] for every other point $q$ to the right (resp. left) of $p$, which is within distance $2h$ from $p$, we add a square of side length $|pq|-h$ that has $p$ on its bottom-left corner (resp. bottom-right corner).
\end{enumerate}
The set $\mathcal{S}$ contains $O(n^2)$ squares and all of them are introduced by $\ugrid{}{}$. We say that a square is {\em introduced} by a grid if at least three of its sides lie on the grid.  The following lemma enables us to discretize the problem.
\begin{lemma}
	\label{inclusion-lemma}
Consider an optimal solution for the variant of the BASP problem where the points lie only on the bottom side of the input rectangle. Then, all squares of the solution belong to $\mathcal{S}$.
\end{lemma}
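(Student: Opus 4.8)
The plan is to take an arbitrary optimal solution $S$ for the one-side BASP instance on rectangle $R$ (with $P$ on the bottom side) and argue, square by square, that each square $\sigma \in S$ can be assumed to be one of the types T1, T2, T3. First I would set up the structure: since every square is anchored at some $p \in P$ on the bottom side, each $\sigma \in S$ has $p$ as either its bottom-left corner or its bottom-right corner (the other two anchoring positions would push $\sigma$ below $R$). By a horizontal reflection it suffices to treat the bottom-left case, so fix $\sigma$ with bottom-left corner at $p$ and side length $\ell$, and let $t$ denote its top side (a horizontal segment at height $\ell$) and $q^+$ its bottom-right corner at $x(p)+\ell$.

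The core of the argument is a local exchange/expansion argument showing that $\ell$ is ``pinned'' by something in the instance. The plan: consider sliding the top side $t$ upward (equivalently, growing $\ell$) until $\sigma$ is blocked. Because $S$ is optimal, $\sigma$ cannot be freely enlarged, so one of three obstructions must occur. (1) $\sigma$ hits the top side of $R$, i.e.\ $\ell = h$ — but then $\sigma$ is the largest square with $p$ on its bottom-left corner, so $\sigma$ is of type T1. (2) $\sigma$ hits the right side of $R$, i.e.\ $x(p)+\ell = w$; combined with the fact that it did not already hit the top, this is exactly the ``largest square touching the right wall'' described in grid-line type (4), and again $\sigma \in$ T1. (3) $\sigma$ is blocked by another square $\sigma'$ of $S$. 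The delicate part is analyzing this third case: $\sigma'$ is anchored at some point $q \in P$, and I need to show that the blocking configuration forces $\ell$ to equal one of the T2/T3 side lengths. If $\sigma'$ is anchored at $q$ with $q$ to the right of $p$ and $\sigma'$ has $q$ at its bottom-left or bottom-right corner, the geometry of two axis-aligned squares both sitting on the bottom side, one blocking the other from above, should force either $\ell = |pq|$ (the segment $pq$ is exactly the bottom side of a square of height $\ell$ — type T2, valid when $|pq|\le h$) or $\ell = |pq| - h$ (when $\sigma'$ is the full-height square with $q$ on its bottom-left corner, so its left wall is at $x(q)-h$, and $\sigma$ abuts it: $x(p)+\ell = x(q)-h$ — type T3, with the distance-$2h$ bound coming from $\ell \le h$). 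One must also handle $q$ to the left of $p$ symmetrically, and rule out (or absorb into the above) the case where the blocking square is itself anchored at $p$ — but two distinct squares anchored at the same point $p$ with $p$ a corner of each cannot have disjoint interiors unless one is degenerate, so this does not arise.

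There is a subtlety: when I slide $t$ upward, I may need to first shrink or remove squares of $S$ sitting above $\sigma$, and I must ensure this is done without losing area, using optimality exactly as in the proof of Lemma~\ref{grid-lemma} (if growing $\sigma$ while shrinking the squares above it changed the total area, then either the move or its reverse would improve $S$). So the honest statement is that there is \emph{an} optimal solution in which every square is pinned in one of the above ways; since the lemma is about the existence claim embedded in ``an optimal solution,'' this suffices — or, if we want it for \emph{every} optimal solution, we argue that any square not already of type T1/T2/T3 has strictly more room above it, and enlarging it strictly increases area, contradicting optimality outright. I expect the main obstacle to be the careful case analysis in obstruction~(3): pinning down exactly which pairs $(\text{corner of }\sigma,\text{wall of }\sigma')$ can be in contact and verifying that each such contact yields one of the finitely many listed side lengths $|pq|$ or $|pq|-h$, while correctly tracking the distance thresholds $h$ and $2h$ that make T2 and T3 well-defined. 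Once every square of $S$ is shown to have one of these pinned side lengths and one of the two anchor-corner positions at a point of $P$, it lies in $\mathcal{S}$ by construction, completing the proof.
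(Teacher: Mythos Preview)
Your overall strategy---grow $\sigma$ until it is blocked, then classify the obstruction---is exactly the paper's approach, but your case~(3) has a genuine gap. When the blocking square $\sigma'$ has $q$ at its \emph{bottom-right} corner (so its left wall is at $x(q)-\ell'$, not $x(q)-h$; your description of this sub-case has the corner backwards), the contact condition reads $x(p)+\ell = x(q)-\ell'$, i.e.\ $\ell = |pq|-\ell'$. You assert this forces $\ell' = h$ (so $\sigma$ is of type~T3), but nothing you have written rules out $\ell' < h$. In that situation neither $\sigma$ nor $\sigma'$ is determined by the grid, and no amount of ``pinning'' by the geometry alone gives you a T1/T2/T3 side length.

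The missing ingredient is a \emph{strict convexity} exchange, not the ``if the move changes area then the move or its reverse improves'' argument you sketch. The paper observes that if both $\sigma$ and $\sigma'$ are non-maximal, then simultaneously growing the larger of the two by $\varepsilon$ and shrinking the smaller by $\varepsilon$ changes the total area by $(\ell+\varepsilon)^2+(\ell'-\varepsilon)^2-\ell^2-\ell'^2 = 2\varepsilon(\ell-\ell')+2\varepsilon^2 > 0$ (assuming w.l.o.g.\ $\ell\ge\ell'$), a strict improvement that contradicts optimality. This is what forces $\sigma'$ to be full-height (type~T1 itself), pinning $\ell = |pq|-h$. Also note that your paragraph about ``squares of $S$ sitting above $\sigma$'' is vacuous here: every anchor is on the bottom side, so every square of $S$ rests on the bottom of $R$ and nothing is above $\sigma$; the only obstruction in case~(3) is lateral.
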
 
\begin{proof}
	Our proof is by contradiction. Consider an optimal solution $S$ for this problem and assume that it contains a square $s$ that does not belong to $\mathcal{S}$. Without loss of generality we assume that $s$ has a point $p$ on its bottom-left corner. Since $s$ is not of type (T1), the top side of $s$ does not lie on the top side of $R$. Also, the right side of $s$ does not lie on the right side of $R$. If the right side of $s$ does not touch any other square in $S$, then we can enlarge $s$ and increase the total area of $S$ which contradicts its optimality. Let $r$ be the square that touches the right side of $s$. Let $q$ be the point that $r$ is anchored on. Since $s$ is not of type (T2), $q$ is the bottom-right corner of $r$. Moreover, since $s$ is not of type (T3), $r$ is not a largest square that is anchored at $q$.

	So, we have two touching squares $s$ and $r$ and none of them are maximum squares. See Figure~\ref{rectangle-fig}(b). Without loss of generality assume that $s$ is not smaller than $r$. By concurrently enlarging $s$ and shrinking $r$ by a small amount, the gain in the area of $s$ would be larger than the loss in the area of $r$. This will increase the total area of $S$ which contradicts its optimality.
\end{proof}

\begin{figure}[t]
	\centering
	\setlength{\tabcolsep}{0in}
	$\begin{tabular}{cc}
	\multicolumn{1}{m{.54\columnwidth}}{\centering\includegraphics[width=.35\columnwidth]{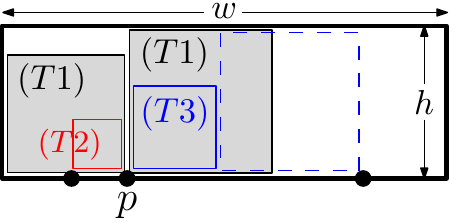}}
	&\multicolumn{1}{m{.46\columnwidth}}{\centering\includegraphics[width=.338\columnwidth]{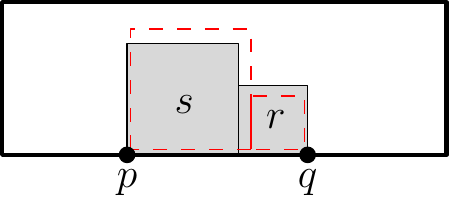}}
	\\
	(a)&(b)
	\end{tabular}$
	\caption{(a) The construction of $\mathcal{S}$ (b) Illustration of the proof of Lemma~\ref{inclusion-lemma}.}
	\label{rectangle-fig}
\end{figure}

As a consequence of Lemma~\ref{inclusion-lemma}, to solve the BASP problem, it suffices to find a subset of non-overlapping squares in $\mathcal{S}$ with maximum area. For every square $s\in \mathcal{S}$, we introduce a closed interval $I_s$ with the bottom side of $s$. We set the weight of $I_s$ to be the area of $s$. Let $\mathcal{I}$ be the set of these intervals. Any maximum-weight independent set of intervals in $\mathcal{I}$ corresponds to a set of non-overlapping squares in $\mathcal{S}$ with maximum area. A maximum-weight independent set of $m$ intervals, given in sorted order of their left endpoints, can be computed in $O(m)$ time~\cite{Hsiao1992}. The set $\mathcal{S}$ contains $O(n^2)$ squares and can be computed within the same time bound. Consequently, $\mathcal{I}$ can be computed in $O(n^2)$ time. Having the points of $P$ sorted from left to right, the sorted order of the intervals in $\mathcal{I}$ can be obtained within the same time bound. Thus, the total running time of our algorithm is $O(n^2)$.

\subsection{Points on two opposite sides}
In this section we study a version of the BASP problem where the points of $P$ lie on two opposite sides of square $Q$. We show how to reduce an instance of this problem into $O(n^2)$ instances of the one-sided version. Since the one-sided version can be solved in $O(n^2)$ time, this reduction implies an $O(n^4)$-time solution for the two-sided version. We refer to a square that is anchored at a top point (resp. bottom point) by a {\em top square} (resp. a {\em bottom square}).   
\begin{lemma}
\label{separating-line-lemma}
For any optimal solution for the BASP problem, where the input points lie only on the top and bottom sides of the input square, there exists no horizontal line $\ell$ that intersects both a top square and a bottom square in its interior.
\end{lemma}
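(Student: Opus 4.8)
The plan is to prove the (slightly stronger) statement by a direct length‑counting argument that does not even use optimality: it holds for \emph{any} family of interior‑disjoint axis‑aligned squares contained in $Q$, each anchored at a corner on the top or bottom side. Set up coordinates with $Q=[0,H]\times[0,H]$, the bottom side on $y=0$. Suppose, for contradiction, that a horizontal line $\ell=\{y=y_0\}$ with $0<y_0<H$ meets the interior of some top square $s_T$ and of some bottom square $s_B$; let $a$ and $c$ be their side lengths.

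The first step is to turn ``$\ell$ meets the interior'' into an inequality on side lengths. Since $s_T$ is anchored at a point of the top side and is contained in $Q$, that anchor point must be a \emph{top} corner of $s_T$ (otherwise $s_T$ would stick out above $y=H$), so $s_T$ occupies the vertical range $[H-a,H]$; for $\ell$ to pass through its interior we need $H-a<y_0$, i.e.\ $a>H-y_0$. Symmetrically $s_B$ occupies the vertical range $[0,c]$ and $c>y_0$. Adding these gives $a+c>(H-y_0)+y_0=H$.

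The second step derives the contradiction from the horizontal extents. Let $I_T$ and $I_B$ be the projections onto the $x$-axis of $\operatorname{int}(s_T)$ and $\operatorname{int}(s_B)$; these are open intervals of lengths exactly $a$ and $c$, both contained in $[0,H]$. Because $y_0$ lies strictly inside the vertical range of both squares, any point $x\in I_T\cap I_B$ would make $(x,y_0)$ a common interior point of $s_T$ and $s_B$, contradicting interior‑disjointness; hence $I_T\cap I_B=\emptyset$. But two disjoint open subintervals of $[0,H]$ have total length at most $H$, so $a+c\le H$, contradicting the inequality from the first step.

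I do not expect a real obstacle here; once one observes that $\ell$ meeting both interiors forces $a+c>H$, the contradiction is immediate. The only two points that need care are: (i) justifying that an anchor on the top side must be a top corner of the square, which is exactly where containment in $Q$ is used to pin the vertical range to $[H-a,H]$ (and symmetrically $[0,c]$); and (ii) the step ``$I_T\cap I_B\neq\emptyset\Rightarrow$ shared interior point'', which relies on $y_0$ being \emph{strictly} interior to both vertical ranges, i.e.\ precisely on $\ell$ meeting the open squares rather than merely touching their boundaries. Optimality of the solution plays no role; interior‑disjointness, axis‑alignment, and containment in $Q$ are all that is used.
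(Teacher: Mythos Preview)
Your proof is correct and follows essentially the same argument as the paper: both derive $a+c>H$ from $\ell$ meeting both interiors, then use interior-disjointness plus the fact that the squares are squares to obtain a contradiction on the horizontal extents. Your version is more carefully written (explicitly justifying the vertical ranges and working with open $x$-projections), and your observation that optimality is irrelevant is correct and worth noting.
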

\begin{proof}
Suppose for a contradiction that such line $\ell$ exists, say it intersects a top square $s$ and a bottom square $r$.
Since $\ell$ crosses both $s$ and $r$, the height of $s$ plus the height of $r$ is larger than $h$ (the height of the boundary square). This implies that their total width is also larger than $h$. Since $s$ and $r$ are non-overlapping, there is a vertical line which separates $s$ from $r$. These two facts imply that the width of the boundary square is larger than $h$, which is a contradiction.
\end{proof}

By Lemma~\ref{separating-line-lemma}, for every optimal solution there exists a horizontal line that separates its top squares from its bottom squares; refer to such a line as a {\em separating line}. We introduce a set $\mathcal{L}$ of $O(n^2)$ horizontal lines and claim that for every optimal solution of the BASP problem, there exists a separating line that belongs to $\mathcal{L}$. Assume that $Q$ is the unit square, and its bottom-left corner is the origin. For a point $p$, let $p_x$ denotes its $x$-coordinate. First, we add to $\mathcal{L}$ the horizontal line $y=1/2$. Then, for every point $p$ on the bottom side of $Q$ we add the following lines to $\mathcal{L}$ (see Figure~\ref{discrete-fig}(a)): 
\begin{enumerate}[(\arabic*)]
	\item $y=p_x$; this line represents the top side of the largest square that has $p$ on its bottom-right corner.
	\item $y=1-p_x$; this line represents the top side of the largest square that has $p$ on its bottom-left corner.
	\item for every point $q\neq p$ on the bottom side of $Q$, we add $y=|p_x-q_x|$; this line represents the top side of the square that is anchored at $p$ and has another corner at $q$.
	\item for every point $q\neq p$ on the bottom side of $Q$, we add $y=|p_x-q_x|/2$; this line represents the the top side of the square that is anchored at $p$ and has another corner at the mid-point of the segment $pq$.
\end{enumerate}

Then, for every point $p$ on the top side of $Q$, we add to $\mathcal{L}$ the lines analogous to items (1)--(4).

\begin{lemma}
\label{discrete-lemma}
For any optimal solution of the BASP problem, there exists a separating line that belongs to $\mathcal{L}$. 
\end{lemma}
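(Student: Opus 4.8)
The plan is to start from an arbitrary optimal solution $S$, which by Lemma~\ref{separating-line-lemma} admits *some* separating line, and then argue that we may slide this line to one of the $O(n^2)$ candidate positions in $\mathcal{L}$ without destroying the separation property. Let $\ell^*$ be a separating horizontal line for $S$, and let $y_{\top}$ be the infimum of the $y$-coordinates reached by the bottom squares and $y_{\bot}$ the supremum of the $y$-coordinates of the top square bottoms — more precisely, let $a$ be the maximum height reached by any bottom square (the highest top-side of a bottom square) and $b$ be the minimum bottom-coordinate of any top square. Since $\ell^*$ separates, $a \le b$, and any horizontal line $y=c$ with $a \le c \le b$ is also separating. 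So it suffices to show that the closed interval $[a,b]$ contains a value of $\mathcal{L}$, or else that one can first perturb $S$ (without decreasing its area, hence keeping it optimal) so that this becomes true.

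First I would handle the easy case: if $a \le 1/2 \le b$, then the line $y=1/2 \in \mathcal{L}$ works and we are done. Otherwise, up to the top–bottom symmetry, assume $b < 1/2$, so every top square has its bottom strictly below $y=1/2$; since a top square is anchored on the top side $y=1$, this forces every top square to have side-length $> 1/2$, and two such squares would overlap horizontally in a width-$1$ strip — hence there is at most one top square, and it is the unique obstruction. Let $s$ be this top square, anchored at $p$ on the top side, with bottom-coordinate $b$ (so $b = 1 - \operatorname{side}(s)$). I would then argue that in an optimal solution $s$ may be taken to be one of the "canonical" top squares whose bottom-coordinate is recorded in $\mathcal{L}$: namely, enlarge $s$ as much as possible. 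If $s$ can grow until it hits the left or right side of $Q$, its bottom lands at $y = 1-p_x$ or $y = p_x$ (items (1),(2) applied to top points); if it is blocked by another top square, that cannot happen since $s$ is the only top square; if it is blocked by a bottom square $r$, then enlarging $s$ downward changes $a$ as well, and I would instead track the pair: the configuration is governed by the point $p$ and either a bottom side of $Q$ or the point $q$ anchoring $r$, and the separating value falls on a line of type (3) or (4). The key observation making type (4) relevant is that when $s$ and $r$ touch along a vertical line and neither can be enlarged, optimality (as in the proof of Lemma~\ref{inclusion-lemma}) forces their sizes to be comparable, and in the tight sub-case the contact height is exactly $|p_x - q_x|/2$.

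The main obstacle I anticipate is the bookkeeping in this last case: showing that whenever $[a,b]$ contains no line of $\mathcal{L}$, one can locally modify $S$ — shrinking the unique top square $s$ and/or the topmost bottom square(s), re-expanding others — so as to obtain an equally good optimal solution whose separating interval $[a',b']$ *does* contain an $\mathcal{L}$-line, without introducing overlaps. This is essentially a case analysis on what blocks the topmost bottom square and the (unique) lowest top square from being enlarged, mirroring the T1/T2/T3 classification of Lemma~\ref{inclusion-lemma} but now applied "from both sides" of the separating line. Once that structural normalization is in place, the conclusion is immediate: the normalized optimal solution has a separating line equal to one of the $y$-values we explicitly inserted into $\mathcal{L}$. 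I would present the argument by fixing the optimal solution that, among all optimal solutions, maximizes the total area covered *below* its separating line (or some similar extremal choice), so that the desired local moves cannot be made — and then reading off from the blocked configuration which of items (1)–(4) pins down the separating value.
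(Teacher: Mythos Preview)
Your reduction to a single top square is where the argument breaks. From $b<1/2$ you conclude that ``every top square has its bottom strictly below $y=1/2$'', but $b$ is the \emph{minimum} bottom-coordinate over the top squares; $b<1/2$ tells you only that \emph{one} top square reaches that low. That square indeed has side $>1/2$ and is unique among top squares of side $>1/2$, but smaller top squares (side $<1/2$, bottom $>1/2$) can sit beside it in the leftover width. Hence the sentence ``if it is blocked by another top square, that cannot happen since $s$ is the only top square'' is unjustified: the big top square $s$ may be blocked laterally by a smaller top square $s'$, and if $s'$ is anchored at its far corner then the height $b=1-\mathrm{side}(s)$ depends on $\mathrm{side}(s')$, which your analysis never controls. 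Your proposed extremal choice does not obviously resolve this, and your reading of item~(4) is also off: that line records two \emph{same-side} squares of equal size touching, not a top/bottom contact.

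The paper's proof sidesteps this by a convexity argument rather than normalization. It starts from the \emph{largest} square overall (say a bottom square $s$) and takes $\ell$ through its top side. If $\mathrm{side}(s)\le 1/2$ then every square has side $\le 1/2$ and $y=1/2$ separates. Otherwise $s$ is the unique \emph{bottom} square touching $\ell$, but arbitrarily many \emph{top} squares $S_2$ may touch $\ell$, with further top squares $S_3$ touching those. One then slides $\ell$ over $[a-\varepsilon,a+\varepsilon]$, resizing $s$, one lateral neighbour $s_1$ of $s$, and all of $S_2\cup S_3$ accordingly; the total area is a sum of squared affine terms in the height $x$ of $\ell$, hence strictly convex. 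The items (1)--(4) defining $\mathcal{L}$ are exactly the degeneracies that would pin $\ell$; if none occurs, the convex quadratic attains its maximum at an endpoint of the interval and not at $x=a$, contradicting optimality of $S$. This perturbation handles the multiplicity of top squares in one stroke, which is precisely the case your outline leaves open.
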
 
\begin{proof}
	Consider an optimal solution $S$ for this problem. Let $s$ be the largest square in $S$ and, without loss of generality, assume that $s$ is a bottom-square and that $p$ is the bottom-left corner of $s$; see Figure~\ref{discrete-fig}(b). By Lemma~\ref{separating-line-lemma}, there exists a separating line for $S$. Let $\ell$ be the line that touches the top side of $s$; observe that $\ell$ is separating since it intersects no top square by Lemma~\ref{separating-line-lemma}. If $\ell$ is below the line $y=1/2$, then due to maximality of $s$, $y=1/2$ is also a separating line for $S$ and belongs to $\mathcal{L}$. Assume that $\ell$ is above $y=1/2$.

\begin{figure}[t]
	\centering
	\setlength{\tabcolsep}{0in}
	$\begin{tabular}{cc}
	\multicolumn{1}{m{.45\columnwidth}}{\centering\includegraphics[width=.3\columnwidth]{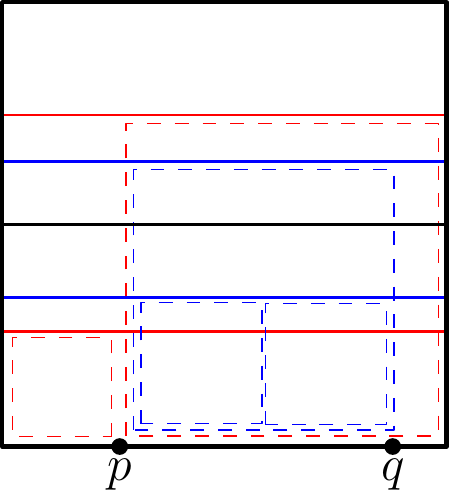}}
	&\multicolumn{1}{m{.55\columnwidth}}{\centering\includegraphics[width=.37\columnwidth]{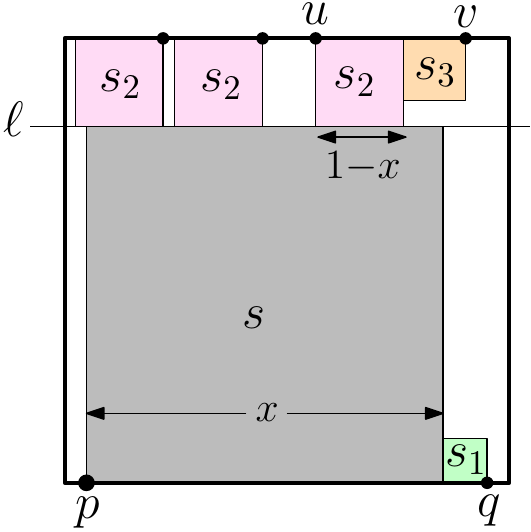}}
	\\
	(a)&(b)
	\end{tabular}$
	\caption{(a) The lines that are added to $\mathcal{L}$ for $p$. (b) Illustration of the proof of Lemma~\ref{discrete-lemma}.}
	\label{discrete-fig}
\end{figure}

The rest of our proof is by contradiction. By a similar reasoning as in the proof of Lemma~\ref{separating-line-lemma}, we argue that $s$ is the only bottom square that touches $\ell$. However, there might be arbitrarily many top squares that touch $\ell$. Let $a$ denote the $y$-coordinate of $\ell$. We continuously move $\ell$ up and down within the vertical range $[a-\epsilon,a+\epsilon]$, for a very small amount $\epsilon$. Then, the $y$-coordinate of $\ell$ is $x$, where $x\in[a-\epsilon,a+\epsilon]$. While moving $\ell$ in this range, we change (enlarge or shrink) some squares of $S$ as follows and keep track of their area (see Figure~\ref{discrete-fig}(b)): 

	\begin{itemize}
		\item We change $s$ in such a way that its top side always lies on $\ell$. Thus, the are of $s$ would be $x^2$.
		\item Observe that the right side of $s$ does not touch the boundary square because otherwise $\ell$ would have been added to $\mathcal{L}$ by item (2). There can be only one square in $S$ that touches the right side of $s$. If such a square exists, then let $s_1$ denote that square, and assume that it is anchored at a point $q$; see Figure~\ref{discrete-fig}(b). The point $q$ is on the bottom-right corner of $s_1$ because otherwise $\ell$ would have been added to $\mathcal{L}$ by item (3). We change $s_1$ in such a way that its left side always touches the right side of $s$. Thus, the area of $s_1$ is $(|pq|-x)^2$
		\item Let $S_2$ be the set of all top squares that touch $\ell$. We change these squares in such a way that their bottom sides touch $\ell$. The area of every such square is $(1-x)^2$.
		\item We construct a set $S_3$ of top squares as follows. Consider every square $s_2\in S_2$ and let $s_2$ be anchored at a point $u$. If there is a top square $s_3$ in $S$ that touches $s_2$ from the side that does not contain $u$, then we add $s_3$ to $S_3$. Let $s_3$ be anchored at $v$ as depicted in Figure~\ref{discrete-fig}(b). The point $v$ is not on the boundary of $s_2$ because otherwise $\ell$ would have been added to $\mathcal{L}$ by item (3). Also, the square $s_3$ does have the same size as $s_2$ because otherwise $\ell$ would have been added to $\mathcal{L}$ by item (4); in fact $s_3$ is smaller than $s_2$. We change $s_3$ in such a way that it always touches $s_2$. Thus, by moving $\ell$ in the above range, the area of $s_3$ will be $(|uv|-(1-x))^2$.
	\end{itemize}

Let $S'$ be the set of the above squares; i.e., $S'=\{s,r\}\cup S_2 \cup S_3$. After performing the above adjustments, the squares in $S$ remain non-overlapping. Also the squares in $S\setminus S'$ remain unchanged. Thus, by moving $\ell$ on the vertical range $[a-\epsilon, a+\epsilon]$, we obtain a valid solution for the BASP problem. For a given $x\in [a-\epsilon, a+\epsilon]$, the total area of the squares in $S'$ is

$$f(x)=x^2+(|pq|-x)^2+|S_2|\cdot (1-x)^2+\sum_{s_3\in S_3} \mathrm{area}(s_3),$$

where $|S_2|$ denotes the cardinality of $S_2$. As discussed above, the area of $s_3$ is of the form $(c-(1-x))^2$ for some constant $c$. This implies that $f(x)=\alpha x^2+\beta x+\gamma$ for some constants $\alpha>0$, $\beta$, and $\gamma$. This means that $f(x)$ is a convex function on the domain $[a-\epsilon, a+\epsilon]$. Thus, the maximum value of $f(x)$ is attained at an endpoint of the domain, but not at $a$. Therefore, the original solution $S$, for which $\ell$ has $y$-coordinate $a$, cannot be an optimal solution for the BASP problem.
\end{proof}

The set $\mathcal{L}$ contains $O(n)$ lines per point of $P$, and thus, $O(n^2)$ lines in total.
These lines can be computed in $O(n^2)$ time. By Lemma~\ref{discrete-lemma}, for every optimal solution there exists a separating line in $\mathcal{L}$. Therefore, by checking every line $\ell$ in $\mathcal{L}$ and taking the one that maximizes the total area of the two one-sided instances of the problem (one for each side of $\ell$), we can solve the two-sided version of the problem in $O(n^4)$ time. 

\paragraph{Remark.} A restricted version of the BASP problem, where every point of $P$ should be assigned a non-zero square, can be solved in $O(n)$ time for the one-sided case, and in $O(n^2)$ time for the two-sided case. In the one-sided case, we have a constant number of squares/intervals per point because we only need to check its two neighbors. By a similar reason, in the two-sided case we get a constant number of lines per point, and thus, $O(n)$ lines in total.

\section{Conclusion}
\label{sec:conclusion}
In this paper, we considered the anchored rectangle and square packing problems in which all points are on the boundary of the square $Q$. By exploiting the properties of an optimal solution, we gave an optimal linear-time exact algorithm for the rectangle packing problem. Observe that our algorithm covers nearly everything for large $n$ (contrasting with the fraction of $7/12{-}\varepsilon$ achieved in the non-boundary case~\cite{BalasDT16}). For there are (up to rotation) at least $n/2$ points in $R_B\cup P_\mathcal{T}$, which define $n/2+1$ vertical slabs.  Rule (1) or (2b) will consider the narrowest of them as hole, which has area at most $1/(n/2+1)$ if $Q$ has area 1. So, we cover a fraction of $1-O(1/n)$ of $Q$. We also considered the square packing problem when the points on $P$ are on two opposite sides of $Q$, and gave an $O(n^4)$-time algorithm for this problem.

The most interesting open question is to determine the complexity of the BARP or BASP problem for when the points of $P$ can lie in the interior of $Q$. Is it polynomial-time solvable? As a first step, it would be interesting to characterize which polygonal curves on $Q\cup \grid{P}{S}$ could be boundaries of a hole in a solution. Moreover, the complexity of the BASP problem when the points of $P$ are on all four sides of $Q$ remains open.

\subparagraph{Acknowledgement.} The authors thank Paz Carmi and Lu\'{i}s Fernando Schultz Xavier da Silveira for helpful discussions on the problem.

\bibliographystyle{plain}
\bibliography{ref}

\end{document}